\documentclass[11pt]{revtex4-2}
\usepackage[a4paper,margin=1in]{geometry}
\usepackage{amsmath, amssymb, amsfonts, latexsym,mathrsfs}
\usepackage{amsthm}
\usepackage{color,graphicx,hyperref,geometry}
\usepackage{inputenc}
\usepackage{footnote}
\usepackage{float}
\usepackage{enumerate}
\usepackage{ulem,booktabs,multirow}
\usepackage{bm,physics,units}
\usepackage{tikz}
\allowdisplaybreaks

\newtheorem{theorem}{Theorem}[section]
\newtheorem{lemma}[theorem]{Lemma}
\newtheorem{proposition}[theorem]{Proposition}

\newtheorem{definition}[theorem]{Definition}

\newtheorem{remark}[theorem]{Remark}
\newcommand{\eps}{\varepsilon}
\newcommand{\Om}{\Omega}
\newcommand{\R}{\mathbb R}
\newcommand{\Qvec}{\mathbf Q}
\newcommand{\C}{\mathbb C}

\newcommand{\e}{\mathrm e}
\newcommand{\loc}{\mathrm{loc}}

\newcommand{\divop}{\operatorname{div}}

\def\P{\mathbf{P}}
\def\I{\mathbf{I}}
\def\n{\mathbf{n}}
\def\x{\mathbf{x}}
\begin{document}
\preprint{APS/123-QED}

\title{Using the Renormalized Energy from Ginzburg-Landau Theory to Compute Solution Landscapes in the Reduced Landau-de Gennes Theory}

\author{Yucen Han}
 \affiliation{Center for Applied Mathematics, Renmin University of China, Beijing 100872, China}

\author{Apala Majumdar}
\email{apala.majumdar@manchester.ac.uk}
\affiliation{Department of Mathematics, University of Manchester, Manchester M13 9PL, United Kingdom}

\begin{abstract}
We study nematic liquid crystal phases confined to bounded, two-dimensional (2D) domains, in the reduced Landau-de Gennes (rLdG) theoretical framework. The rLdG framework is equivalent to the celebrated Ginzburg-Landau (GL) theoretical setting, under certain assumptions. In the asymptotic limit of large domains (or low temperatures), we recall the concept of the finite-dimensional renormalized energy in the GL theory [Bethuel et al., GL Vortices, Prog. Nonlinear Differ. Equ. Appl. 13, Birkhäuser, 1994.] 
and generalize it to polygons with sharp corners. The renormalized energy is fully defined by the triplet, $(\mathbf{a},\mathbf{c}, \mathbf{d}$), of topological degrees of the interior defects ($\mathbf{d}$) and the winding number contribution of polygon vertices ($\mathbf{c}$) and interior defect locations ($\mathbf{a}$), subject to sum rules dictated by the imposed Dirichlet boundary condition. We use the critical points of the renormalized energy to construct initial conditions for the rLdG solver, recover known energy-minimizing and non-energy minimizing saddle points of the rLdG energy on a disc, regular hexagon and rectangle along with the discovery of previously unreported rLdG solution branches. The renormalized energy approach is computationally substantially more efficient than a direct PDE-solver in the rLdG setting but misses solution branches with line defects and finite-size effects.
\end{abstract}

\maketitle

\section{Introduction}
\label{sec:introduction}

Liquid crystals are mesophases that combine fluidity with the ordering characteristics of solids and are classical examples of partially ordered materials with long-range orientational ordering and partial translational ordering \cite{de1993physics}. The nematic phase is the simplest liquid crystal (LC) phase with long-range orientational ordering or special material directions that manifest in direction-dependent optical, physical and mechanical properties.
Nematics have long been the working material of choice for the thriving display industry and modern research has shifted attention to nematic applications in healthcare, sensors, photonics, robotics, anti-counterfeiting and metamaterials, to name a few \cite{lagerwall2012new}. There are of course, many more exotic LC phases such as smectic phases, cholesterics, bent-core phases, twist-bend phases, discotic and columnar phases etc. Mathematical modelling can play a crucial role in not only explaining LC phenomena but also in informing and guiding new interdisciplinary LC research and applications.
The mathematical modelling of LCs is embedded in the study of systems of nonlinear and coupled partial differential equations (PDEs) for the LC order parameters. In this paper, we focus on the fascinating  PDE problems that arise in the context of modelling confined nematic LCs (NLCs) in 2D domains, with special attention to polygons as experimental prototypes.

We work with the celebrated Landau-de Gennes (LdG) theory for modelling nematic phases \cite{de1993physics}. The constituent nematic molecules tend to align with each other, along certain locally preferred or distinguished directions. The LdG theory is a powerful and widely used continuum theory for modelling nematic phases; it describes the nematic phase by a macroscopic order parameter--the LdG $\mathbf{Q}_f$-tensor order parameter which can be interpreted in terms of experimental measurables such as the dielectric anisotropy and birefringence data~\cite{patranabish2021quantum}. The LdG $\mathbf{Q}_f$-order parameter is a symmetric, traceless $3\times 3$ matrix with five degrees of freedom. The eigenvectors model the special material directions and the corresponding eigenvalues measure the degree of orientational order about these directions, respectively~\cite{majumdar2010equilibrium}. The nematic director refers to the eigenvector with the largest positive eigenvalue. In particular, the isotropic or disordered phase is modelled by $\Qvec_f =0$ whereas ordered nematic phases are modelled by $\Qvec_f \neq 0$. A nematic phase is said to be uniaxial if $\Qvec_f$ has two degenerate non-zero eigenvalues and biaxial if the corresponding $\Qvec_f$ has three distinct eigenvalues. 

The LdG theory is a variational theory with an associated free energy, typically of the form
\begin{equation} \label{eq:FLdG}
F_{3D}[\Qvec_f]: = \int_{\Sigma} f_{el}\left(\Qvec_f, \nabla \Qvec_f \right) + f_b\left(\mathbf{Q}_f\right)~dV,
\end{equation} where $\Sigma \subset \mathbb{R}^3$ is a bounded and simply-connected domain with Lipschitz boundary. For the purposes of this paper, we take $\Sigma=\left\{(x,y,z)\in \mathbb{R}^3: (x,y) \in \Omega; z\in [0, h] \right\}$, $\Omega \subset \mathbb{R}^2$ is a bounded and simply-connected domain and $h$ is the domain height. The bulk potential, $f_b$ is a non-convex potential that dictates temperature-driven phase transitions in homogeneous systems; the simplest form of $f_b$ is
\[
f_b\left(\Qvec_f \right) = \frac{A_{3D}}{2}\textrm{tr}\Qvec_f^2 - \frac{B}{3}\textrm{tr}\Qvec_f^3 + \frac{C}{4}\left(\textrm{tr}\Qvec_f^2 \right)^2
\]
where $A_{3D}$ depends linearly on the temperature, $B, C>0$ are material-dependent constants. The minimizers of $f_b$ determine the bulk phase as a function of the temperature and predict a first-order isotropic-nematic phase transition at $A_{3D} = \frac{B^2}{27C}$, i.e., the minimizers of $f_b$ are ordered uniaxial phases for $A_{3D} < \frac{B^2}{27C}$ and are isotropic for $A_{3D}>\frac{B^2}{27C}$. The elastic energy density, $f_{el}$, penalises spatial inhomogeneities. The elastic energy density is usually a quadratic and convex function of $\nabla \Qvec_f$, so that the energy density is lower semi-continuous and coercive in the space of measurable square-integrable $\mathbf{Q}$-tensors with square-integrable first derivatives (i.e., the standard Sobolev space, $W^{1.2}$). A common choice for $f_{el}$ is the one-constant elastic energy density
\[
f_{el} = \frac{L}{2}\left| \nabla \Qvec \right|^2
\]
where $\left| \nabla \Qvec \right|^2 = \sum_{i,j,k=1}^{3} \left(\frac{\partial Q_{ij}}{\partial x_k}\right)^2$ and $L>0$ is an elastic constant. The physically observable NLC equilibria are modelled by global or local minimizers of the LdG free energy, subject to imposed boundary conditions and constraints~\cite{majumdar2010equilibrium}.

In the thin-film limit or in the $h\to 0$ limit for $\Sigma = \Omega \times [0, h]$, it is reasonable to expect that minimizers of the LdG energy are $z$-invariant, the nematic director is planar and the energy-minimizing $\Qvec_f$-tensors have a fixed eigenvector in the $z$-direction with an associated fixed eigenvalue so that $\Qvec_f$ only has two degrees of freedom. This is the so-called reduced Landau-de Gennes (rLdG) model that has been hugely successful for quasi-two dimensional (2D) nematic liquid crystal systems or thin three-dimensional (3D) systems, both for capturing the qualitative properties of physically relevant solutions and for probing into defect cores \cite{rajamanickam2026nematic}. Then the energy-minimizing LdG $\mathbf{Q}_f$-order parameters can be written as:
\begin{equation} \label{eq:Qf}
    \mathbf{Q}_f = \begin{pmatrix}
                    P_{11} + q_3 & P_{12} & 0 \\
                    P_{12} & -P_{11} + q_3 & 0 \\
                    0     & 0              & -2q_3
                    \end{pmatrix}
\end{equation} where $q_3>0$ is some explicitly computable constant in terms of the material constants and prescribed boundary conditions. The rLdG model describes the 2D nematic phase in terms of the rLdG order parameter with simply two degrees of freedom: $P_{11}$ and $P_{12}$ which contain information about the planar nematic director (the leading eigenvector has no out-of-plane component) and the degree of planar ordering. In particular, the zeroes of $P_{11} = P_{12} = 0$ describe defects or vortices in 2D. The corresponding rLdG free energy (obtained by substituting \eqref{eq:Qf} into \eqref{eq:FLdG}, with the one-constant elastic energy density) is given by:
\begin{equation}
    E[\P]: = \int_{\Omega}\frac{L}{2}|\nabla \P|^2+\left(\frac{A}{2}tr\P^2+\frac{C}{4}\left(tr\P^2\right)^2\right) \mathrm{d A},
    \label{eq:p_energy1}
\end{equation} where $A \neq A_{3D}$ above, $L,C$ are positive material-dependent constants. In particular, $\frac{A}{2} = A_{3D}/2-Bq_3+3Cq_3^2$ and $A$ still depends linearly on the temperature, for given positive $B, C, q_3$. The critical points of the rLdG energy are classical solutions of a system of two nonlinear and coupled partial differential equations, which are equivalent to the celebrated Ginzburg-Landau (GL) partial differential equations \cite{bethuel1994ginzburg}. For the special temperature, $A_{3D} = -\frac{B^2}{3C}$, one can show that $q_3 = B/6C$ and there is an exact correspondence between the critical points of the rLdG energy in \eqref{eq:p_energy1} and the LdG energy in \eqref{eq:FLdG}. In general, the critical points of the rLdG energy do not correspond to critical points of the LdG energy and are only good approximations to LdG energy minimizers in the $h\to 0$ limit or the 2D limit.

The solution landscapes of the rLdG energy with the one-constant elastic energy density, i.e., hierarchies of saddle points and minimizers of \eqref{eq:p_energy1} have been extensively studied on regular polygons, irregular polygons and generalized to other models of composite nematic materials, such as ferronematics \cite{rajamanickam2026nematic, han2019transition, han2021solution} etc. The High-Index Optimization-Based Shrinking Dimer
 (HiOSD) method is a popular numerical method for computing critical points of free energies of the type \eqref{eq:p_energy1}, their Morse indices (a measure of the instability or the number of unstable eigendirections of a critical point) and pathways between different critical points reported in \cite{yin2019high, yin2020construction}. For example, the rLdG energy \eqref{eq:p_energy1} has a unique critical point on small 2D domains (or of comparable size to the nematic correlation length) and typically has multiple critical points (multiple minimizers and multiple saddle points) when $\Omega$ is sufficiently large in size. In \cite{yin2020construction}, the authors discuss the rLdG solution landscape on square domains with tangent boundary conditions (the planar director is tangent to the square edges, i.e., aligned along the $x$-axis on the horizontal edges and aligned along the $y$-axis on the vertical edges), as a function of the square edge length $D$, at a fixed $A<0$ in \eqref{eq:p_energy1}. The unique critical point is the WORS (Well Order Reconstruction Solution) with two diagonal defect lines (with $P_{11}=P_{12} = 0$ in \eqref{eq:Qf}) for $D$ small enough. As $D$ increases, the WORS loses stability and multiple solution branches appear: stable solutions without interior defects and multiple unstable saddle points with symmetric constellations of interior point vortices and/or line defects, such that the line defects typically occur along a diagonal or near an edge. In \cite{han2022elastic}, the authors also investigate the effects of elastic anisotropy (when the elastic energy density comprises the Dirichlet energy density, $|\nabla \mathbf{Q}_f|^2$, and additional terms with multiple elastic constants) on rLdG solution landscapes on polygons. In a batch of papers \cite{han2023reduced, shi2023hierarchies, shi2024multistability}, the authors stack the critical points of the rLdG energy to construct critical points of the LdG energy in \eqref{eq:FLdG} on cuboids, illustrating the relevance of rLdG solution landscapes in three-dimensional (3D) scenarios. 

The HiOSD method struggles on large domains due to numerical stiffness and the absence of high-quality initial conditions. To circumvent both difficulties (the curse of high dimensionality and poor initial conditions), we recall the concept of the renormalized energy from GL theory to locate vortices (defects with vanishing order parameter) in energy minimizers, in a certain asymptotic limit relevant for large domains or low temperatures. The mathematical theory of two-dimensional GL vortices is established in the monograph of Bethuel, Brezis, and Hélein \cite{bethuel1994ginzburg}; they study the asymptotic limits of energy minimizers with topologically non-trivial Dirichlet boundary conditions that belong to the zero set of the GL bulk potential. The singular limit is a canonical harmonic map; the canonical harmonic map contains information about the multiplicity of vortices and the profile near a vortex. The optimal locations of the interior vortices (necessitated by the topologically non-trivial boundary datum) are dictated by the minimizers of a renormalized energy, which can be regarded as the finite part of the energy (with the vortex energies removed) \cite{bethuel1994ginzburg}. The renormalized energy is usually the sum of the self-energies of the vortices and the inter-vortex interaction energies (depends on their relative locations), along with a regularizing field. The authors show that the singular limit of GL energy minimizers is a canonical harmonic map with exactly $|d|$ interior vortices of degree either $+1$ or $-1$, subject to a Dirichlet boundary condition of topological degree $d$, where $d$ is an integer. The locations of the interior vortices are minima of the renormalized energy. This framework has been extended by André and Shafrir to minimization problems with more general boundary conditions, including boundary conditions with zeroes \cite{andreshafrir1998}. Beyond minimizers, Lin demonstrates that arbitrary sequences of solutions of the Ginzburg--Landau equations can converge, as the core size tends to zero, to configurations whose vortex locations are critical points of the renormalized energy \cite{lin1995solutions}, thus establishing a direct bridge between infinite-dimensional PDEs and the finite-dimensional renormalized energy. These ideas have subsequently been developed in models motivated by liquid crystals. Alama, Bronsard, and van Brussel study strong and weak tangential anchoring, showing that the singular limit may contain both interior vortices and boundary defects, including half-degree boundary vortices \cite{alamabronsard2023}. Alama, Bronsard, and Colinet establish a corresponding $\Gamma$-convergence theory for tangential boundary conditions and provide a rigorous description of Jacobian concentration and energy lower bounds up to the boundary \cite{alamabronsard2024}. 
Together, these works trace the development of Ginzburg--Landau theory from classical interior-vortex analysis to non-minimizing critical points, boundary defects, and liquid-crystal models, providing the analytical foundation for studying the multiplicity and locations of vortices in geometrically frustrated systems.

In this paper, we use the renormalized energy to study the asymptotic limits of sequences of critical points of the rLdG energy with an energy bound, in terms of a canonical harmonic map. We compute the limiting optimal locations of the interior vortices with $+1$ or $-1$ degree for rLdG energy minimizers on a disc, subject to Dirichlet data of arbitrary topological degree. The optimal locations are the corners of a $N$-polygon of an explicitly computable radius, where $N$ is the number of vortices. We also compute critical points of the renormalized energy for arbitrary combinations of $\left(N, \mathbf{d} \right)$ where $N$ is the number of vortices and $\mathbf{d} = \left(d_1, \ldots, d_N \right)$ are the topological degrees of the vortices, subject to sum rules. These critical points are used to construct high-quality initial conditions for the rLdG solver and we find several interesting saddle points of the rLdG energy on the disc with multiple interior vortices, of different topological degrees.

Further, we extend the renormalized energy to a polygon with straight edges, sharp corners and tangent boundary conditions. The tangent boundary conditions require the planar director to be tangent to the edges, creating discontinuities of director at the vertices or vertex vortices. There are no other boundary vortices but each vertex $b_j$ is characterized by an appropriately defined topological degree, $c_j$. In an unpublished paper, we show that sequences of critical points of the rLdG energy on polygons with tangent boundary conditions (with an energy bound), converge to a canonical harmonic map, fully defined by the triplet $\left(\mathbf{a}, \mathbf{c}, \mathbf{d} \right)$, in the limit of vanishing defect core size. The $\mathbf{a}$ define the locations of the interior vortices and the $\mathbf{d}$ are the associated topological degrees of the interior vortices. The locations of the interior vortices are the critical points of an appropriately defined renormalized energy, for given $\mathbf{d}, \mathbf{c}$. Whilst critical points of the rLdG energy correspond to critical points of the renormalized energy, the reverse relationship need not be true. We use the HiOSD method to efficiently compute the critical points of the renormalized energy for different choices of $\mathbf{d}, \mathbf{c}$, subject to sum rules. These critical points define vortex locations, which define a canonical harmonic map, which can then be used as an initial condition for the rLdG solver. This combined approach of using the renormalized energy in conjunction with the HiOSD recovers the reported rLdG critical points with point defects on a hexagon and a rectangle, but misses critical points with line defects \cite{han2021solution, shi2022nematic}. Importantly, this new approach also discovers new solution branches not reported by means of direct HiOSD searches of the rLdG solution landscape. Notably, this combined approach is significantly faster in terms of computational time, compared to direct PDE-solvers for the rLdG problem.

Our main contribution has been to introduce the renormalized energy as a tool for efficiently canvassing the rLdG solution landscapes, in the limit of vanishing defect core sizes. This work can also be useful for studying defect dynamics in terms of the renormalized energy and could potentially, be extended to study optimal locations of minimal connections or line defects joining pairs of vortices. This paper is organized as follows. We review the rLdG model in \ref{sec:rldg}. In Section~\ref{sec:bbh}, we review the main results from \cite{bethuel1994ginzburg} for smooth, bounded and simply-connected domains with regular boundaries and review the concept of the renormalized energy, followed by some explicit analytic and numerical results for rLdG critical points on a disc. We focus on non-smooth polygons in Section~\ref{sec:polygon}, give the formula for the modified renormalized energy that accounts for interior vortices and vortices at vertices accompanied by results on a hexagon and a rectangle to illustrate computational efficiency in Section~\ref{sec:numericspolygons}. We conclude with some perspectives in Section~\ref{sec:conclusion}.

\section{Reduced Landau-de Gennes model}
\label{sec:rldg}
As described in the Introduction, in the rLdG model, the 2D nematic phase is described in terms of a reduced order parameter - the rLdG order parameter, 
\[ \P \in S_2 = \{\P \in \mathbb R^{2 \times 2}\mid \P = \P^T,  \tr\P = 0\},
\]
i.e., the rLdG order parameter, $\P$, is a symmetric, traceless $2 \times 2$ matrix or a mapping, $\P: G \subset \mathbb{R}^2 \to S_2$ with only two degrees of freedom
\[
\P=
\begin{pmatrix}
P_{11}&P_{12}\\
P_{12}&-P_{11}
\end{pmatrix}.
\] The admissible space for $\P$ is the standard Sobolev space, $W^{1,2}\left(G; S_2 \right)$, subject to Dirichlet boundary conditions as outlined below.
We use the rLdG energy given in \eqref{eq:p_energy1},
where $L$ is a positive elastic constant and we employ the one-constant elastic energy density, 
\[
|\nabla \mathbf{P}|^2 = 2\left( \left(\frac{\partial P_{11}}{\partial x}\right)^2 + \left(\frac{\partial P_{11}}{\partial y}\right)^2 + \left(\frac{\partial P_{12}}{\partial x}\right)^2 + \left(\frac{\partial P_{12}}{\partial y}\right)^2 \right),
\]
$A<0$ is a re-scaled temperature-dependent coefficient, $C$ is a positive  material-dependent constant and $G$ is a bounded and simply connected domain in $\mathbb{R}^2$. Note that we take the coefficient of $\textrm{tr}\P^2$ to be negative so that we work with low temperatures that favour an ordered nematic phase with $\P \neq 0$ in the bulk.
One can also write $\P$ as shown below:
\begin{equation}
    \P = s\left(2 \n\otimes\n- \I_2\right),
    \label{P}
\end{equation}
where $\n = \left(\cos\gamma,\sin\gamma\right)^T$ is the nematic director in the plane, $s$ is a non-negative scalar order parameter that measures the degree of planar order about the nematic director $\n$ and $\I_2$ is the $2\times 2$ identity matrix. Thus we have
\[P_{11} = s\cos\left(2\gamma\right),\ P_{12} = s\sin\left(2\gamma\right).\] The planar defect set is modelled by the nodal set of $s$.

Let $\lambda$ be a characteristic length scale of the 2D domain. We re-scale the energy \eqref{eq:p_energy1} with $\bar{x} = x/\lambda$ so that \eqref{eq:p_energy1} reduces to (up to a constant): 
\begin{equation}
    \bar{E}[\P]: = \int_{\tilde{G}}\frac{1}{2}|\bar{\nabla} \P|^2+\bar{\lambda}^2\left(-\frac{A}{4C}tr\P^2+\frac{1}{8}\left(tr\P^2\right)^2\right) \mathrm{d \bar{A}},
    \label{eq:p_energy}
\end{equation}
where $\tilde{G}$ is the normalized domain and $\bar{\lambda}^2 = \frac{2C\lambda^2}{L}$.
The corresponding Euler-Lagrange equations are
\begin{align}\label{eq:rLdG_EL}
\bar{\Delta} P_{11} = \bar{\lambda}^2\left(P_{11}^2 + P_{12}^2-\frac{A}{2C} \right)P_{11},\\
\bar{\Delta} P_{12} = \bar{\lambda}^2\left(P_{11}^2 + P_{12}^2-\frac{A}{2C} \right)P_{12},
\end{align}
with Dirichlet boundary conditions
\begin{equation}\label{eq:Pg}
P_{11}+P_{12}i = \sqrt{\frac{A}{2C}}g,\ on\ \partial \tilde{G},
\end{equation} where $A, C>0$ by assumption, $i=\sqrt{-1}$ and $g$ is a given smooth complex function on the boundary. In our numerical simulations, we use $A = -\frac{B^2}{2C}$ in \eqref{eq:p_energy1}, which corresponds to $A_{3D} = -\frac{B^2}{3C}$ as used in our previous studies \cite{han2021solution,shi2022nematic}, with MBBA as a representative NLC material and fixed values $B = 0.64 \times 10^4 N m^{-2}$, $C = 0.35 \times 10^4 N m^{-2}$ (see  \cite{wojtowicz1975introduction}).

We can make a direct correspondence between the rLdG energy in \eqref{eq:p_energy} and the celebrated GL theory for superconductivity. Assuming 
\begin{equation}\label{eq:uP}
\sqrt{\frac{A}{2C}}u = P_{11} + P_{12}i = se^{i2\gamma}\in H^1(G;\mathbb{C}),
\end{equation}
and $\bar{x} = x/\lambda$ for domain normalization, the energy in \eqref{eq:p_energy} can be written as a GL energy:
\begin{equation}\label{eq:GL_energy}
E_{\eps}[u] = \int_{\tilde{G}}\frac{1}{2}|\nabla u|^2 + \frac{1}{4\eps^2}(|u|^2-1)^2 d\bar{A},
\end{equation}
where \begin{equation}\label{eq:eps_lambda}
\eps^2 = \frac{2 CL}{\lambda^2 A^2} .
\end{equation}
The corresponding Dirichlet boundary condition for $u$ is
\begin{equation}
u = g\ on\ \partial \tilde{G}.
\end{equation}
Note that, due to the head-to-tail symmetry of nematic directors, the argument of $u$, namely the argument of $P_{11}+P_{12}i$, is $2\gamma$ or twice the director angle.
For notational simplicity, we omit tildes and bars in the remainder of the manuscript.

Consequently, there is a direct correspondence between the rLdG energy functional (and its parameter $\bar{\lambda}$) for the $2 \times 2$ traceless symmetric matrix $P$, and the GL energy functional (and its parameter $\eps$) for the complex scalar field $u$. The limit $\eps \to 0$ corresponds to $\bar{\lambda} \to \infty$. In the subsequent sections, we exploit the limiting behavior of the GL functional as $\eps \to 0$ to deduce the limiting solution $u_*$. As mentioned in the Introduction, there is substantive literature on the characterization of $u_*$ in terms of a finite-dimensional optimization problem or in terms of a suitably-defined finite-dimensional renormalized energy. Via \eqref{eq:uP}, this limiting solution $u_*$ yields a corresponding $\P_*$ and $\P_*$ serves as an excellent initial condition for numerical computations of critical points of the rLdG energy, subject to different kinds of Dirichlet boundary conditions. In general, the non-convex and nonlinear rLdG energy has a complex solution landscape, featured by multiple minima and non energy-minimizing saddle points. For clarity, we provide  the values of $\bar{\lambda}^2$ for the numerically computed critical points of \eqref{eq:p_energy} in the figure captions and the corresponding values of $\eps$ can be readily retrieved using \eqref{eq:eps_lambda}.

\section{Smooth $G$ and $g$}
\label{sec:bbh}
In this section, we review the celebrated results on the characterization of sequences of critical points $\left\{u_{\eps_\ell}\right\}$ of \eqref{eq:GL_energy} as $\eps_\ell \to 0$ (see \cite{bethuel1993asymptotics}). We assume that $g$ is smooth complex-valued function with $|g| = 1$ and 
 the winding number of $g$ is 
\begin{equation}\label{eq:dg}
deg(g,\partial G) = \frac{1}{2\pi}\int_{\partial G} g\times g_{\tau} dS = d,
\end{equation}
where $g_{\tau}$ is the derivative of $g$ in the tangential direction. As mentioned above, these results carry over directly to critical points of the rLdG energy in the $\lambda \to \infty$ limit and the winding number of $g$ on the boundary is therefore twice the winding number of the director orientation. For example, if $G$ is the unit disk and the nematic director is subject to radial/homeotropic boundary conditions, then $d=2$ whereas the winding number of the nematic director is $+1$. 

The admissible set is
\begin{equation}
H_g^1 = \{u\in H^1(G;\mathbb{C});u = g\ on\ \partial G\}.
\end{equation}
\begin{proposition}
The corresponding 
Theorem 0.6 from \cite{bethuel1994ginzburg}. Assume $G$ is a star-shaped, smooth, bounded, simply connected domain in $\mathbb {R} ^{2}$ and $g$ is smooth on $\partial G$. Then there is a subsequence $\eps_\ell\to 0$ and $N$ points $a_1,\ldots,a_N$ in $G$ and a smooth harmonic map $u_*$ from $\bar{G}\backslash\cup_i\{a_i\}$ into $S^1$ with $u_* = g$ on $\partial G$ such that
\[
u_{\eps_\ell}\to u_*\ in\ C_{loc}^k(G\backslash\cup_i\{a_i\})\ \forall k\geq 0\ and\ in\ C^{1,\alpha}(\bar{G}\backslash\cup_i\{a_i\})\ \forall\alpha<1,\] 
where $u_{\eps_\ell}$ is a critical point of the GL energy in \eqref{eq:GL_energy} and a solution of the Euler-Lagrange equations
\begin{align}\label{eq:EL}
-\Delta u &= \frac{1}{\eps^2}u(1-|u|^2)\ in\ G,\\
u&=g\ on\ \partial G.
\end{align}
with parameter $\eps = \eps_\ell$.
Moreover, there exists integers $d_1,\ldots,d_N\in\mathcal{Z}\backslash\{0\}$ satisfying 
\begin{equation}\label{eq:dNg}
\sum_{i = 1}^N d_i = d
\end{equation}
and a smooth harmonic function $\phi:\bar{G}\to\mathcal{R}$ such that
\begin{equation}\label{eq:u*}
u_*(z) = e^{i\phi(z)}\frac{(z-a_1)^{d_1}}{|z-a_1|^{d_1}}\ldots\frac{(z-a_N)^{d_N}}{|z-a_N|^{d_N}}
\end{equation}
where $\phi$ is the solution of the Dirichlet problem
\begin{align}
\Delta\phi &= 0\ in\ G,\\
\phi &= \phi_0\ on\ \partial G.
\end{align}
Here, $\phi_0$ is defined on $\partial G$ by
\begin{equation}\label{eq:phi0}
e^{i\phi_0(z)} = g(z)\frac{|z-a_1|^{d_1}}{(z-a_1)^{d_1}}\ldots\frac{|z-a_N|^{d_N}}{(z-a_N)^{d_N}}
\end{equation}
satisfying $u_*(z) = e^{i\phi_0}\frac{(z-a_1)^{d_1}}{|z-a_1|^{d_1}}\ldots\frac{(z-a_N)^{d_N}}{|z-a_N|^{d_N}} =g(z)$ on the boundary $z\in \partial G$.
In addition, we have 
\begin{equation}
\nabla(u_*(z)\frac{|z-a_j|^{d_j}}{(z-a_j)^{d_j}})(a_j) = 0,\  j = 1,\ldots,N,
\end{equation}
which expresses that $(a_1,\ldots,a_N)$ is a critical point of some appropriate renormalized energy 
\begin{equation}\label{eq:Wa}
W(a_1,\ldots,a_N) = -\pi\sum_{i\neq j}d_id_jln|a_i-a_j| + \frac{1}{2}\int_{\partial G}\Phi_0(g\times g_{\tau})dS - \pi\sum_{i = 1}^N d_iR_0(a_i), 
\end{equation}
where 
\[\Phi_0(z)=\sum_{i = 1}^N d_i ln|z-a_i|+R_0(z),\]
and
\begin{align}
\Delta R_0 &= 0,\ in\ G,\label{eq:Neu1}\\
\frac{\partial R_0}{\partial \nu} &= g\times g_{\tau}-\frac{\partial \sum_{i=1}^N d_i ln|z-a_i|}{\partial \nu}\ on\ \partial G.\label{eq:Neu2}
\end{align}
$\nu$ is the outward normal to $\partial G$ and $\tau$ is a unit tangential vector to $\partial G$ such that $(\nu,\tau)$ is direct, the cross product of 2D vector is controversial, we follow the definition in \cite{cormen2022introduction}, $(x_1,y_1)\times(x_2,y_2) = x_1y_2-x_2y_1$, and the cross product is a scalar, which can be negative or positive.
\end{proposition}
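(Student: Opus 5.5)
The plan follows the Bethuel--Brezis--H\'elein strategy. Implicitly we assume the natural energy bound $E_\eps(u_\eps)\le \pi|d|\ln(1/\eps)+C$, or more generally $E_\eps(u_\eps)\le C\ln(1/\eps)$, since without it the bad set cannot be controlled.

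\textbf{Step 1: uniform estimates.} Since $G$ is star-shaped, the Pohozaev identity applies: multiply \eqref{eq:EL} by $x\cdot\nabla u_\eps$ and integrate. This gives
\[
\frac{1}{\eps^2}\int_G(1-|u_\eps|^2)^2\le C
\]
uniformly in $\eps$, with $C$ depending only on $g$ and $G$. The boundary terms are handled using the smoothness of $g$ and a bound on $\partial_\nu u_\eps$ obtained from the equation. The maximum principle gives $|u_\eps|\le1$, and scaling gives $|\nabla u_\eps|\le C/\eps$.

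\textbf{Step 2: bad discs.} Cover the set $\{|u_\eps|<1/2\}$ by a bounded number $N$ of discs of radius $\lambda\eps$. This uses the potential bound together with the gradient bound: each bad point carries a fixed amount $\ge c$ of potential energy on an $\eps$-disc. Extract a subsequence along which the centres converge to points $a_1,\dots,a_N$; this may require merging clusters, and any $a_j$ on the boundary is excluded by the boundary Pohozaev estimate. Away from the $a_j$, the $\eta$-compactness/energy argument of BBH gives $\int_{G\setminus\cup B_\rho(a_j)}|\nabla u_\eps|^2\le C(\rho)$. Writing $u_\eps=\rho_\eps e^{i\psi_\eps}$ locally and using elliptic estimates on the equations for $\rho_\eps$ and $\psi_\eps$ gives $C^k_{loc}$ convergence in the interior. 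Boundary regularity for the Dirichlet problem gives $C^{1,\alpha}$ up to $\partial G$, to a map $u_*$ with $|u_*|=1$.

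\textbf{Step 3: structure of $u_*$.} The limit satisfies $\divop(u_*\times\nabla u_*)=0$, since this current is divergence-free for every $\eps$, and $u_*$ is $S^1$-valued away from the $a_j$. Hence $u_*$ is a harmonic map to $S^1$ on $\bar G\setminus\{a_j\}$. Let $d_j$ be the degree of $u_*$ on small circles around $a_j$. Then $\sum_j d_j=d$ by homotopy to $\partial G$. The map $u_*\prod_j\left(|z-a_j|/(z-a_j)\right)^{d_j}$ has zero degree everywhere, so it equals $e^{i\phi}$ with $\phi$ single-valued and harmonic; the $H^1_{loc}$ energy bound excludes any extra singular part. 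This gives \eqref{eq:u*} with boundary data \eqref{eq:phi0}. If some $d_j=0$, that point is a removable singularity and is discarded from the list.

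\textbf{Step 4: the critical-point condition, which is the main obstacle.} Use the stress-energy tensor $T_{kl}=\partial_ku\cdot\partial_lu-\delta_{kl}e_\eps(u)$, which is divergence-free for solutions of \eqref{eq:EL}. Integrating $\partial_lT_{kl}=0$ over an annulus around $a_j$ and passing to the limit on a fixed circle $\partial B_\rho(a_j)$, where convergence is smooth, shows that the circle integral of $T_{kl}[u_*]\nu_l$ vanishes for every $\rho$. The difficulty is that the potential term must not contribute on the limit circle. This holds because the potential tends to zero locally uniformly away from the $a_j$, and one must keep that step honest.

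Now expand $u_*=e^{i\phi_j}\left((z-a_j)/|z-a_j|\right)^{d_j}$ near $a_j$, where $\phi_j$ collects $\phi$ and the other vortex factors. Letting $\rho\to0$, the only surviving cross term is $2\pi d_j\nabla^\perp\phi_j(a_j)$, which must therefore vanish. This is exactly the stated gradient condition.

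Finally, differentiate \eqref{eq:Wa} in $a_j$ using the Neumann problem \eqref{eq:Neu1}--\eqref{eq:Neu2}. The harmonic conjugate of $\Phi_0$ is the phase of $u_*$, so $\nabla_{a_j}W=-2\pi d_j\nabla^\perp\phi_j(a_j)$, up to sign conventions. Hence the vortex configuration is a critical point of $W$.
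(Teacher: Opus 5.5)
The paper gives no proof of this proposition; it quotes Bethuel--Brezis--H\'elein, so the benchmark is their theory of non-minimizing solutions on starshaped domains. Your Step 4 is the right mechanism, with one slip. The identity $\int_{\partial B_\rho(a_j)}T_{kl}[u_\eps]\nu_l\,ds=0$ comes from integrating $\partial_lT_{kl}=0$ over the whole disc $B_\rho(a_j)$, where $u_\eps$ is smooth. An annulus only shows that the limiting circle integral is independent of $\rho$, which does not force $\nabla\phi_j(a_j)=0$.

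The genuine gap is in the a priori estimates.

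\textbf{The added energy hypothesis.} The statement contains no energy bound, and none is needed. Your own Step 1 gives $\eps^{-2}\int_G(1-|u_\eps|^2)^2+\int_{\partial G}|\partial_\nu u_\eps|^2\le C$ for \emph{every} solution on a starshaped domain. Together with $|\nabla u_\eps|\le C/\eps$, this alone bounds the number of bad discs, exactly as you use in Step 2. So the claim that the bad set cannot be controlled without an energy hypothesis is incorrect, and adding the hypothesis proves a weaker theorem.

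The sharp form $E_\eps(u_\eps)\le\pi|d|\ln(1/\eps)+C$ actually conflicts with the generality of the statement. Around each limit vortex you have the lower bound $E_\eps(u_\eps;B_\rho(a_j))\ge\pi|d_j|\ln(\rho/\eps)-C$, so combining gives $\sum_j|d_j|\le|d|$. Hence all $d_j$ have the sign of $d$. This excludes mixed-sign configurations such as $\mathbf d=(2,1,-1)$, which the statement allows and the paper's searches include.

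\textbf{The missing estimate away from the vortices.} With only the general bound $E_\eps\le C\ln(1/\eps)$, the key estimate
$\int_{G\setminus\cup_jB_\rho(a_j)}|\nabla u_\eps|^2\le C(\rho)$
does not follow from what you cite, and this estimate is the real crux for non-minimizers. $\eta$-compactness only locates where $|u_\eps|$ is small. The BBH energy argument for minimizers works by matching the upper bound $\pi|d|\ln(1/\eps)+C$ against the vortex lower bound. With an unspecified constant in $C\ln(1/\eps)$ there is nothing to match, and an excess of order $\ln(1/\eps)$ could a priori sit away from the vortices.

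This bound has to be extracted from the equation, for example from $\divop(u_\eps\times\nabla u_\eps)=0$. On $G\setminus\cup_iB(x_i,\lambda\eps)$ the phase solves $\divop(|u_\eps|^2\nabla\psi_\eps)=0$. It therefore minimizes $\int|u_\eps|^2|\nabla\psi|^2$ among phases with its own boundary values and periods, so you can compare it with the harmonic (canonical) competitor. The modulus $\nabla|u_\eps|$ is then controlled through the equation for $|u_\eps|$. Arguments of this kind are also what make an energy hypothesis unnecessary on starshaped domains.

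\textbf{A related imprecision in Step 3.} $u_*\notin H^1$ near $a_j$, so an ``$H^1_{\loc}$ bound'' cannot remove singular parts of $\phi$. The term $c\ln|z-a_j|$ is excluded by the zero flux $\int_{\partial B_r(a_j)}u_\eps\times\partial_\nu u_\eps\,ds=0$, passed to the limit on the circle. The terms $\mathrm{Re}\,(c_k(z-a_j)^{-k})$ with $k\ge1$ are excluded by a uniform integrability bound such as $\sup_\eps\|\nabla u_\eps\|_{L^1(G)}<\infty$, which again has to be proved from the equation.
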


We refer to the points $\left\{a_1, \ldots a_N \right\}$ as the interior point defects of the limiting map, $u_*$ in \eqref{eq:u*} since $u_*$ vanishes at these interior points. In other words, each point $a_i$ is an interior point defect of topological degree $d_i$ such that $\sum_{i = 1}^N d_i = d$ as required, where $d$ is the winding number of the Dirichlet boundary condition $g$.

\subsection{Minimizers of the renormalized energy on a disk with $g\times g_{\tau} = const \ d$}\label{sec:disk}

Let $G$ be the unit disk, $G=\left\{ z \in \mathbb{C}: |z| \leq 1 \right\}.$ We take the corresponding boundary condition to be $g = e^{id\theta}$. Then the winding number of $g$ is $d = \frac{1}{2\pi}\int_{\partial G} g\times g_{\tau} dS$.
\begin{lemma}
  \label{prop:renormalized}
The renormalized energy \eqref{eq:Wa} on a unit disk with $g = e^{id\theta}$ reduces to
\begin{equation}\label{eq:W_disk}
W(a_1,\ldots,a_N) = -\pi\sum_{i\neq j}d_id_j\ln|a_i-a_j| - \pi \sum_{i = 1}^N d_i^2\ln|1 - |a_i|^2|- \pi \sum_{i \neq j}^N d_id_j\ln|1 - a_i\bar{a}_j|.
\end{equation}
\end{lemma}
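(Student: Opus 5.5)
The plan is to solve the Neumann problem \eqref{eq:Neu1}--\eqref{eq:Neu2} for $R_0$ explicitly using image charges, and then substitute into \eqref{eq:Wa}. On the unit circle with $g=e^{id\theta}$, we have $g\times g_\tau = d$. I will take the candidate
\[
R_0(z) = \sum_{j=1}^N d_j \ln|1-\bar a_j z| ,
\]
which is harmonic in $G$ because each $a_j$ lies in the open disk, so $1-\bar a_j z\neq 0$ for $|z|\le 1$.

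The key step is to check the boundary condition. At $z=e^{i\theta}$ the outward normal derivative is $\partial_r$, and
\[
\partial_r \ln|z-a| = \mathrm{Re}\,\frac{z}{z-a}, \qquad \partial_r \ln|1-\bar a z| = \mathrm{Re}\,\frac{-\bar a z}{1-\bar a z}.
\]
Conjugating the second quotient and using $\bar z = 1/z$ on $|z|=1$, its real part equals $\mathrm{Re}\,\frac{-a}{z-a}$. The two contributions therefore sum to $\mathrm{Re}\,\frac{z-a}{z-a}=1$. Multiplying by $d_j$ and summing over $j$, with the sum rule \eqref{eq:dNg}, gives
\[
\partial_\nu R_0 = d - \partial_\nu \sum_j d_j\ln|z-a_j|,
\]
which is exactly \eqref{eq:Neu2}.

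Since $R_0$ is only determined up to an additive constant $C$, I must check that $W$ does not depend on this choice. Shifting $R_0$ by $C$ changes the boundary term in \eqref{eq:Wa} by $\tfrac12 C\cdot 2\pi d$. It changes the last term by $-\pi C\sum_i d_i = -\pi d C$. These two changes cancel, so any normalization of $R_0$ may be used.

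Next I evaluate the boundary integral. On $|z|=1$ we have $|1-\bar a_j z| = |\bar z - \bar a_j| = |z-a_j|$, so $\Phi_0 = 2\sum_j d_j\ln|z-a_j|$ there. The integral becomes $\tfrac12 d\int_0^{2\pi}\Phi_0\,d\theta$. It vanishes because, for $|a|<1$,
\[
\int_0^{2\pi}\ln|e^{i\theta}-a|\,d\theta=\int_0^{2\pi}\ln|1-a e^{-i\theta}|\,d\theta = 0 .
\]
This follows from the mean value property of the harmonic function $w\mapsto\ln|1-aw|$ on the closed unit disk, which takes the value $0$ at $w=0$.

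Finally, the last term of \eqref{eq:Wa} is
\[
-\pi\sum_i d_i R_0(a_i) = -\pi\sum_{i,j} d_i d_j \ln|1-\bar a_j a_i| .
\]
I split this into the diagonal terms $i=j$, which give $-\pi\sum_i d_i^2\ln(1-|a_i|^2)$, and the off-diagonal terms, which give $-\pi\sum_{i\neq j} d_i d_j\ln|1-a_i\bar a_j|$ since $|1-\bar a_j a_i| = |1-a_i\bar a_j|$. Together with the unchanged interaction term $-\pi\sum_{i\neq j}d_id_j\ln|a_i-a_j|$, this yields \eqref{eq:W_disk}.

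The only step needing care is the real-part computation on the circle that verifies the Neumann condition; everything after that is bookkeeping.
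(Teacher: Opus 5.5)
Your proof is correct and follows essentially the same route as the paper: the image-charge solution $R_0(z)=\sum_j d_j\ln|1-\bar a_j z|$ (up to an additive constant), the boundary identity $\partial_\nu\bigl(\ln|1-z\bar a_j|+\ln|z-a_j|\bigr)=1$, Jensen's formula to make the boundary integral vanish, and substitution into \eqref{eq:Wa}. The differences are only in presentation: you verify the boundary identity by a real-part computation where the paper differentiates in polar coordinates, and you check directly that $W$ does not depend on the additive constant, whereas the paper carries that constant through the term $\pi\bigl(dR_0(0)-\sum_i d_iR_0(a_i)\bigr)$, which it obtains from the mean value property.
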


\begin{proof}

Substituting $\Phi_0$ with $R_0 + \sum_{i = 1}^N d_i ln|r-a_i|$ in \eqref{eq:Wa}, the renormalized energy is
\begin{equation}\label{eq:W}
W(a_1,\ldots,a_N) = -\pi\sum_{i\neq j}d_id_jln|a_i-a_j| +  \frac{d}{2}\sum_{i=1}^N\int_{\partial G}d_i ln|z-a_i|dS + \pi (dR_0(0)- \sum_{i = 1}^N d_iR_0(a_i)), 
\end{equation}
where each line integral (around the unit circle $\partial G$) vanishes from Jensen's inequality ($|a_i| < 1$) and
\begin{align}
\Delta R_0 &= 0,\ in\ G\\
\frac{\partial R_0}{\partial \nu} &= d-\frac{\partial \sum_{i=1}^N d_i ln|z-a_i|}{\partial \nu},\ on\ \partial G.
\end{align}
Recall that for $z \in \partial G$, the following identity holds 
\begin{equation}
\frac{\partial}{\partial \nu} \left( \ln|1 - z \bar{a}_j| + \ln|z - a_j| \right) = 1 \quad \text{on } \partial G.
\end{equation}
The proof of this identity is elementary and is provided in Appendix B for completeness.
Subsequently, since $\ln|1-z\bar{a}_i|$ is analytic in $G$, the analytic solution of $R_0$ is 
\begin{equation}
R_0(z) = \sum_{j=1}^N d_j \ln|1 - z\bar{a}_j| +C
\end{equation}
We obtain the formula of renormalized energy $W$ on a unit disk \eqref{eq:W_disk} by substituting the above equation into \eqref{eq:W}.
\end{proof}

\begin{remark}\label{rk:rotation_invariance} (Rotation Invariance): 
One can show that $W(\mathbf{R}a_1,\ldots,\mathbf{R}a_N) = W(a_1,\ldots,a_N)$, for any rotation matrix $\mathbf{R}$. Hence, if $\left\{a_1,\ldots,a_N \right\}$ is a critical point of the renormalized energy on a unit disk, then so are the rotated sets $\mathbf{R}a_1,\ldots,\mathbf{R}a_N$. In the following, we assume that $a_1$ is real-valued and calculate the relative location of $N$ interior defect points.
\end{remark}

\begin{remark}\label{rk:opposite_sign} (Opposite Sign):
$W(a_1,\ldots,a_N)$ with $d_1,\ldots,d_N$ and $d$ is equivalent to $W(a_1,\ldots,a_N)$ with $-d_1,\ldots,-d_N$ and $-d$. We give an example with $d = +1$ and $d = -1$ in Fig. \ref{fig:disk_+1_-1}. In other numerical results, we only focus on the case with $d>0$. 
\end{remark}

\begin{theorem}[Minimizers of the renormalized energy on a unit disk]\label{th:disk}
Let \(N\ge 2\), and suppose either \(d_1=\ldots=d_N=1\) or
\(d_1=\ldots=d_N=-1\). Let
\(a_1,\ldots,a_N\in\mathbb D:=\{z\in\mathbb C:|z|<1\}\). Then the renormalized energy on a unit disk is
\begin{equation}\label{eq:W_N_polygon}
W(a_1,\ldots,a_N)
=
-\pi\sum_{\substack{i,j=1\\ i\ne j}}^N \ln |a_i-a_j|
-\pi\sum_{i=1}^N \ln(1-|a_i|^2)
-\pi\sum_{\substack{i,j=1\\ i\ne j}}^N \ln |1-a_i\overline{a_j}|.
\end{equation}
\(W\) attains its minimum. Moreover, up to a permutation of the points
and a rotation, the unique minimizer has $N$-polygon configuration with the locations of defects at 
\begin{equation}\label{eq:Npolygon}
\{a_1,\ldots,a_N\}
=
\left\{r_N e^{i(\theta+2\pi k/N)}:k=0,\ldots,N-1\right\},
\qquad \theta\in\mathbb R,
\end{equation}
where
\[
r_N=
\left(\frac{N-1}{3N-1}\right)^{1/(2N)}.
\]
The minimum value of $W$ in \eqref{eq:W_N_polygon} is
\begin{equation}\label{eq:Wmin}
W_{\min}
=
-\pi N\ln\left[Nr_N^{N-1}(1-r_N^{2N})\right].
\end{equation}
\end{theorem}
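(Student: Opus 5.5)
The plan has three parts: rewrite $W$ as a single logarithm of a product, prove existence of a minimizer by coercivity, and then identify the minimizer, with the last part being the real difficulty.

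\emph{Rewriting $W$.} Put $d_i\equiv 1$ in \eqref{eq:W_disk} from Lemma~\ref{prop:renormalized}; the case $d_i\equiv -1$ is identical by Remark~\ref{rk:opposite_sign}. This gives \eqref{eq:W_N_polygon} directly. The diagonal terms $\ln(1-|a_i|^2)=\ln|1-a_i\overline{a_i}|$ merge with the last sum, so
\[
W=-\pi\ln F,\qquad F(a_1,\ldots,a_N)=\prod_{i\neq j}|a_i-a_j|\prod_{i,j=1}^N|1-a_i\overline{a_j}| .
\]
With $p(z)=\prod_j(z-a_j)$ and $p^*(z)=\prod_j(1-\overline{a_j}z)$, this reads $F=\prod_k |p'(a_k)|\,|p^*(a_k)|$. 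Minimizing $W$ is therefore the same as maximizing $F$.

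\emph{Existence.} This step is routine. Every factor of $F$ is bounded above by $2$, so $W$ is bounded below on $\mathbb D^N$. Moreover, $W\to+\infty$ when two points collide, because $-\ln|a_i-a_j|\to+\infty$. It also tends to $+\infty$ when a point tends to $\partial\mathbb D$, because $-\ln(1-|a_i|^2)\to+\infty$ while the remaining terms stay bounded below. Hence sublevel sets are compact in the open configuration space, and $W$ attains its minimum there.

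\emph{The polygon and its radius.} For $a_k=r\omega^k$ with $\omega=e^{2\pi i/N}$, we have $p(z)=z^N-r^N$. Then $|p'(a_k)|=Nr^{N-1}$, and $\prod_j|1-r^2\omega^{k-j}|=1-r^{2N}$. This gives $F=[Nr^{N-1}(1-r^{2N})]^N$, which is \eqref{eq:Wmin}. Maximizing $(N-1)\ln r+\ln(1-r^{2N})$ in $r$, and writing $t=r^{2N}$, gives $(N-1)(1-t)=2Nt$. So $t=(N-1)/(3N-1)$, which is the stated $r_N$.

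\emph{Optimality and uniqueness of the polygon (main obstacle).} The hard step is to show that every maximizer of $F$ is a rotated regular $N$-gon centred at $0$. I would first try a two-step reduction.

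First, consider the critical-point equations $\partial_{\bar a_k}\ln F=0$. Summing them over $k$, and exploiting the $z\mapsto e^{i\alpha}z$ invariance together with the image-charge structure $|1-a_i\overline{a_j}|=|a_j|\,|a_i-1/\overline{a_j}|$, I would try to show that a maximizer satisfies $p^*(a_k)$ constant in modulus and, further, that $p(z)=z^N-c$. Equivalently, the symmetric functions $e_1,\ldots,e_{N-1}$ of the $a_j$ should vanish.

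Second, compare $F$ at a general configuration with $F$ at its ``symmetrized'' configuration $z^N-c'$, where $c'$ is chosen suitably. For this I would use the Blaschke product $B=p/p^*$: we have $|B'(a_k)|=|p'(a_k)|/|p^*(a_k)|$, and I would combine this with an AM--GM/Jensen inequality on $\sum_k\ln|p^*(a_k)|$, which can be rewritten as a boundary integral of $\ln|p|$ via $\ln|1-a\bar b|=\ln|a-b^*|+\ln|b|$.

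If these inequalities do not close, the fallback is a direct analysis of the critical-point system. I would show that the Hessian of $W$ at the polygon is positive definite modulo the rotation direction, and rule out other critical points with energy below $W_{\min}$ by a continuity argument in $N$. This fallback is admittedly the weakest link, and the step I expect to cost the most effort.
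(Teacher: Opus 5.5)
Your rewriting $W=-\pi\ln F$, the existence argument, and the evaluation of the regular $N$-gon (including the optimal radius $r_N$) are correct. Together, however, they only give the upper bound $\min W\le -\pi N\ln[Nr_N^{N-1}(1-r_N^{2N})]$. The substance of the theorem is the reverse inequality for every configuration in $\mathbb D^N$, together with the characterization of equality, and your proposal has no working argument for it.

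\textbf{Why the proposed steps do not close.}
\begin{enumerate}
\item \emph{The first-order route.} Summing the critical-point equations over $k$ yields a single complex relation. More fundamentally, criticality does not force $p(z)=z^N-c$, because $W$ has non-polygonal critical points. For $N=3$, the set $\{0,b,-b\}$ is fixed by $z\mapsto -z$ combined with swapping two points. By symmetric criticality it is a critical point of $W$ once $|b|$ is critical for the restriction $F=4|b|^6(1-|b|^4)^2$, i.e.\ $|b|^4=3/7$. There $e_2=-b^2\neq 0$. So any proof must use global information, not only the Euler--Lagrange system.
\item \emph{The Blaschke-product/Jensen step.} It is never formulated as an inequality, and it gives no bound at all for the Vandermonde factor $\prod_k|p'(a_k)|$. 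Moreover, the image product can only be controlled collectively: the per-point bound $|p^*(a_k)|\le 1-|a_k|^{2N}$ fails (take $N=2$, $a_1=\epsilon$, $a_2=-\delta$ with $0<\epsilon<\delta<1$).
\item \emph{The fallback.} Positive definiteness of the Hessian at the polygon gives only local minimality. A ``continuity argument in $N$'' has no meaning for an integer parameter. Excluding lower-energy critical points elsewhere is precisely the global problem.
\end{enumerate}

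\textbf{The missing idea.} The paper bounds both factors of $F=\prod_{i<j}|a_i-a_j|^2\prod_{i,j}|1-a_i\overline{a_j}|$ by quantities depending only on $t_i=|a_i|^2$. This decouples the problem into $N$ identical one-variable problems.
\begin{enumerate}
\item \emph{Vandermonde factor.} With $S_N(x)=\sum_{m=0}^{N-1}x^m$, Hadamard's inequality for the scaled Vandermonde matrix $\bigl((a_i/\sqrt\tau)^m\bigr)$ gives $\prod_{i<j}|a_i-a_j|^2\le\tau^{N(N-1)/2}\prod_i S_N(t_i/\tau)$ for every $\tau>0$.
\item \emph{Image product.} Write the Cauchy matrix $C_{ij}=(1-a_i\overline{a_j})^{-1}$ as the Hadamard product of the positive semidefinite Gram matrices $A_{ij}=S_N(a_i\overline{a_j})$ and $B_{ij}=(1-(a_i\overline{a_j})^N)^{-1}$. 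The Cauchy determinant formula, $\det A=\prod_{i<j}|a_i-a_j|^2$, and Oppenheim's inequality $\det(A\circ B)\ge\det A\prod_i B_{ii}$ then give $\prod_{i,j}|1-a_i\overline{a_j}|\le\prod_i(1-t_i^N)$.
\item \emph{One-variable bound.} Take $\tau=r_N^2$. The inequality $(1-t^N)S_N(t/\tau)\le N(1-\tau^N)$ on $[0,1)$, with equality only at $t=\tau$, yields $F\le[Nr_N^{N-1}(1-r_N^{2N})]^N$.
\item \emph{Equality case.} Equality forces $|a_i|=r_N$ for all $i$ and orthogonality of the rows in Hadamard's inequality, i.e.\ $\sum_{m=0}^{N-1}(a_i/a_j)^m=0$ for $i\neq j$. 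Hence the ratios $a_i/a_j$ are nontrivial $N$-th roots of unity, which gives uniqueness up to rotation and permutation.
\end{enumerate}
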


\begin{proof}
Set
\[
\Phi_N(a_1,\ldots,a_N):=\exp(-W(a_1,\ldots,a_N)/\pi).
\]
Then minimizing $W$ is equivalent to maximizing $\Phi_N$. From the definition of
\(W\),
\[
\Phi_N
=
\prod_{i=1}^N(1-|a_i|^2)
\prod_{\substack{i,j=1\\ i\ne j}}^N |a_i-a_j|
\prod_{\substack{i,j=1\\ i\ne j}}^N |1-a_i\overline{a_j}|.
\]
Due to the ordering, equivalently,
\begin{equation}\label{eq:Phi-product-revised}
\Phi_N
=
\prod_{1\le i<j\le N}|a_i-a_j|^2
\prod_{i,j=1}^N |1-a_i\overline{a_j}|.
\end{equation}
The function \(\Phi_N\) extends continuously to \(\overline{\mathbb D}^{\,N}\). It
vanishes whenever two points coincide or at least one point belongs to
\(\partial\mathbb D\). Since \(\Phi_N>0\) for any configuration of distinct points in
\(\mathbb D\), a maximizer of \(\Phi_N\) exists, lies in \(\mathbb D^N\), i.e. $|a_i|<1$, $i = 1,\ldots,N$, and consists
of pairwise distinct points, i.e., $a_i\neq a_j$, for $i\neq j$ and $i,j = 1,\ldots,N$.

Let
\[
t_i:=|a_i|^2,
\qquad
S_N(x):=\sum_{m=0}^{N-1}x^m.
\]
We first estimate the Vandermonde factor $\prod_{1\le i<j\le N}|a_i-a_j|^2$. For any \(\tau>0\), consider
\[
\mathbf{M}=\left(\left(\frac{a_i}{\sqrt\tau}\right)^m\right)_{
\substack{1\le i\le N\\0\le m\le N-1}}.
\]
The Vandermonde determinant formula gives
\begin{equation}\label{eq:MV}
\det \mathbf{M}
=
\tau^{-N(N-1)/4}
\prod_{1\le i<j\le N}(a_j-a_i).
\end{equation}
By Hadamard's determinant inequality,
\begin{equation}\label{eq:Hadamard}
|\det M|^2
\le
\prod_{i=1}^N\sum_{m=0}^{N-1}\left|\frac{a_i}{\sqrt\tau}\right|^{2m}
=
\prod_{i=1}^N S_N\left(\frac{t_i}{\tau}\right).
\end{equation}
Subsequently, due to \eqref{eq:MV} and \eqref{eq:Hadamard}
\begin{equation}\label{eq:Vandermonde-Hadamard-bound-revised}
\prod_{1\le i<j\le N}|a_i-a_j|^2
\le
\tau^{N(N-1)/2}
\prod_{i=1}^N S_N\left(\frac{t_i}{\tau}\right).
\end{equation}

Next we prove
\begin{equation}\label{eq:boundary-product-bound-revised}
\prod_{i,j=1}^N |1-a_i\overline{a_j}|
\le
\prod_{i=1}^N(1-t_i^N).
\end{equation}
Define
\[
A_{ij}:=S_N(a_i\overline{a_j}),
\qquad
B_{ij}:=\frac{1}{1-(a_i\overline{a_j})^N},
\qquad
C_{ij}:=\frac{1}{1-a_i\overline{a_j}}.
\]
Then \(\mathbf{C}=\mathbf{A}\circ \mathbf{B}\), where \(\circ\) denotes the Hadamard product, which is element-wise product.
 Since
\[
A_{ij}
= S_N(a_i\overline{a_j}) = 
\sum_{m=0}^{N-1}a_i^m\overline{a_j}^{\,m}
=
\langle v_{0i},v_{0j}\rangle,
\qquad
v_{0i}=(1,a_i,\ldots,a_i^{N-1}),
\]
and applying the Vandermonde determinant formula again, we have 
\begin{equation}\label{eq:detA-revised}
\det \mathbf{A} = |\det \mathbf{V}_0|^2 = \prod_{1\le i<j\le N}|a_i-a_j|^2>0
\end{equation}
i.e. the matrix \(\mathbf{A}\) is positive definite.
 The matrix \(\mathbf{B}\) is positive semidefinite, because
\[
B_{ij}
=
\sum_{k=0}^{\infty}(a_i^N)^k\overline{(a_j^N)^k}
=
\langle w_{0i},w_{0j} \rangle,
\qquad
w_{0i}=(1,a_i^N,a_i^{2N},\ldots).
\]
By the Cauchy determinant formula \cite[Eq.~(4)]{Schechter1959},
\begin{equation}\label{eq:detC-revised}
\det \mathbf{C}
=
\frac{
\prod_{1\le i<j\le N}(a_i-a_j)
\prod_{1\le i<j\le N}(\overline{a_i}-\overline{a_j})
}{
\prod_{i,j=1}^N(1-a_i\overline{a_j})
}
=
\frac{
\prod_{1\le i<j\le N}|a_i-a_j|^2
}{
\prod_{i,j=1}^N(1-a_i\overline{a_j})
}.
\end{equation}
We claim
\begin{equation}\label{eq:aiaj}
\prod_{i,j=1}^N(1-a_i\overline{a_j})
=
\prod_{i=1}^N(1-t_i)
\prod_{1\le i<j\le N}|1-a_i\overline{a_j}|^2
=
\prod_{i,j=1}^N|1-a_i\overline{a_j}|>0.
\end{equation}
Since \(t_i=|a_i|^2\), the diagonal factors are
\[
1-a_i\overline{a_i}=1-t_i = |1-a_i\overline{a_i}|>0.
\]
For \(i\ne j\), the factors with indices \((i,j)\) and \((j,i)\) are conjugate
\[
1-a_j\overline{a_i}
=
\overline{1-a_i\overline{a_j}}.
\]
Hence
\[
(1-a_i\overline{a_j})(1-a_j\overline{a_i})
=
|1-a_i\overline{a_j}|^2 = |1-a_i\overline{a_j}||1-a_j\overline{a_i}|.
\]
Therefore we have \eqref{eq:aiaj}.
Oppenheim's determinant inequality \cite{Oppenheim1930,HornJohnson2012} for Hadamard products gives
\begin{equation}\label{eq:Oppenheim-step-revised}
\det \mathbf{C}
=\det(\mathbf{A}\circ \mathbf{B})
\ge
\det \mathbf{A}\prod_{i=1}^N B_{ii}
=
\det \mathbf{A}\prod_{i=1}^N\frac{1}{1-t_i^N}.
\end{equation}
Substituting \eqref{eq:detA-revised} and \eqref{eq:detC-revised} into
\eqref{eq:Oppenheim-step-revised} yields \eqref{eq:boundary-product-bound-revised}.
Combining \eqref{eq:Phi-product-revised},
\eqref{eq:Vandermonde-Hadamard-bound-revised}, and
\eqref{eq:boundary-product-bound-revised}, we obtain, for every \(\tau>0\),
\begin{equation}\label{eq:combined-product-bound-revised}
\Phi_N
\le
\tau^{N(N-1)/2}
\prod_{i=1}^N
\left[(1-t_i^N)S_N\left(\frac{t_i}{\tau}\right)\right].
\end{equation}

We choose the parameter \(\tau\) as
\[
\tau^N=\frac{N-1}{3N-1}.
\]
Thus
\[
\sqrt{\tau}
=
\left(\tau^N\right)^{1/(2N)}
=
\left(\frac{N-1}{3N-1}\right)^{1/(2N)}
=r_N .
\] We claim that, for every \(0\le t<1\),
\begin{equation}\label{eq:one-variable-bound-revised}
(1-t^N)S_N\left(\frac{t}{\tau}\right)
\le
N(1-\tau^N),
\end{equation}
the equality holds if and only if \(t=\tau\).
Indeed, we set \(x=t/\tau\) where $\tau^N=\frac{N-1}{3N-1}$. Since $a_i$, $i = 1,\ldots,N$ are inside $\mathbb{D}$, we assume \(t<1\), and subsequently we have
\(0\le x<\tau^{-1}\). The inequality in
\eqref{eq:one-variable-bound-revised} becomes
\[
F_N(x):=(1-\tau^Nx^N)\sum_{m=0}^{N-1}x^m\le N(1-\tau^N).
\]
The derivative of $F_N$ is
\begin{equation}\label{eq:FN1}
F_N'(x)
=
\sum_{k=0}^{N-2}(k+1)x^k
-\tau^N\sum_{k=N-1}^{2N-2}(k+1)x^k.
\end{equation}
Since
\[
F_N'(1)
=\frac{N(N-1)}2
-\tau^N\sum_{m=N}^{2N-1}m
=\frac{N(N-1)}2
-\frac{N-1}{3N-1}\cdot\frac{N(3N-1)}2
=0,
\]
we can write $F_N'(x)$ as
\begin{equation}\label{eq:FN2}
F_N'(x)=(1-x)P_N(x),
\end{equation}
where \(P_N(x)=\sum_{k=0}^{2N-3}p_kx^k\) is a polynomial of $x$. The comparison of coefficients between \eqref{eq:FN1} and \eqref{eq:FN2} gives
\[
p_k=
\begin{cases}
\displaystyle \sum_{m=1}^{k+1}m
=\dfrac{(k+1)(k+2)}2,
&0\le k\le N-2,\\[1.2em]
\displaystyle \sum_{m=1}^{N-1}m-\tau^N\sum_{m=N}^{k+1}m
=\frac{(N-1)(2N-1)}{3N-1},
&N-1\le k\le 2N-3.
\end{cases}
\]
The first group of coefficients with $0\leq k\leq N-2$ and the second group with $N-1\leq k\leq 2N-3$ are both positive.
Hence every coefficient of \(P_N\) is strictly positive. Therefore
\(P_N(x)>0\) for \(x\ge0\), and so \(F_N'(x)>0\) on \([0,1)\) and
\(F_N'(x)<0\) on \((1,\tau^{-1})\). Thus \(F_N\) has its unique maximum at
\(x=1\), where
\[
F_N(1)=N(1-\tau^N).
\]
This proves \eqref{eq:one-variable-bound-revised}.
Applying \eqref{eq:one-variable-bound-revised} to each factor in
\eqref{eq:combined-product-bound-revised}, we obtain
\[
\Phi_N
\le
\tau^{N(N-1)/2}\,[N(1-\tau^N)]^N
=
\left[N\tau^{(N-1)/2}(1-\tau^N)\right]^N.
\]
The upper bound of $\Phi_N$ is
\begin{equation}\label{eq:sharp-Phi-bound-revised}
\Phi_N
\le
\left[Nr_N^{N-1}(1-r_N^{2N})\right]^N
\end{equation} where $r_N = \left(\frac{N-1}{3N-1}\right)^{1/(2N)}$.

It remains to discuss equality. Let
\[
a_k=re^{i(\theta+2\pi k/N)},
\qquad k=0,\ldots,N-1.
\]
For this regular polygon with $N$ vertices,
\[
\prod_{\substack{j=0\\j\ne k}}^{N-1}|a_k-a_j|=Nr^{N-1},
\qquad
\prod_{j=0}^{N-1}|1-a_k\overline{a_j}|=1-r^{2N}.
\]
Therefore
\[
\Phi_N
=
\left[Nr^{N-1}(1-r^{2N})\right]^N.
\]
Taking \(r=r_N\) gives equality in \eqref{eq:sharp-Phi-bound-revised}; hence the
upper bound is sharp.

Conversely, let \((a_1,\ldots,a_N)\) be a maximizer, which attains the bound \eqref{eq:sharp-Phi-bound-revised}, then every
inequality in the chain must be an equality for any maximizer.
Equality in \eqref{eq:one-variable-bound-revised} gives
\[
|a_i|^2=\tau,
\qquad i=1,\ldots,N.
\]
Thus all points lie on the circle of radius \(r_N=\sqrt\tau\) where $\tau^N=\frac{N-1}{3N-1}.
$
Equality in
Hadamard's determinant inequality in
\eqref{eq:Vandermonde-Hadamard-bound-revised} implies that the rows of
\[
\left(\left(\frac{a_i}{r_N}\right)^m\right)_{
\substack{1\le i\le N\\0\le m\le N-1}}
\]
are pairwise orthogonal. Put \(u_i=a_i/r_N\). Then \(|u_i|=1\), and for
\(i\ne j\),
\[
\sum_{m=0}^{N-1}(u_i\overline{u_j})^m=0.
\]
Since $\bar{u}_j = u_j^{-1}$, we can rewrite the summation as
\begin{equation}
\sum_{m=0}^{N-1} (u_i \bar{u}_j)^m = \sum_{m=0}^{N-1} z^m = \frac{1 - z^N}{1 - z} = 0,
\end{equation}
where  $z = u_i u_j^{-1}$.
Since $u_i \neq u_j$ for $i \neq j$, we have $z \neq 1$, which implies $z^N = 1$. Thus, the ratio of any two distinct elements must be a non-trivial $N$-th root of unity. 
Thus, \[
\{u_1,\ldots,u_N\}
=
\{e^{i\theta}e^{2\pi i k/N}:k=0,\ldots,N-1\}.
\]
Subsequently, we obtain the location of interior defects in \eqref{eq:Npolygon} and the minimum of $W$ in \eqref{eq:Wmin}.
\end{proof}

\begin{remark} According to Theorem IX.1 in \cite{bethuel1994ginzburg}, if $u_{\eps}$ is a global minimizer of \eqref{eq:GL_energy}, then $u_{\eps}$ has exactly $|d|$ zeros in $G$ and each zero is of degree $d_i=+1$, $i = 1,\ldots, N$, if $d>0$ ($d_i=-1$, $i = 1,\ldots, N$, if $d<0$). Without loss of generality, we assume $d>0$ in the following.
According to Theorem IX.3 in \cite{bethuel1994ginzburg}, we have
\begin{equation}
E_{\eps}(u_{\eps}) = N\pi|log\eps| + N\gamma + min_{\mathbb{D}^N}W + o(1),
\end{equation}
where $\gamma$ is some universal constant.
In Theorem \ref{th:disk}, we prove that for a unit disk, the interior defect set is defined by the unique minimizer of the renormalized energy in \eqref{eq:Npolygon}
up to a permutation of the points
and a rotation. Hence, as $\eps\to0$, the GL energy minimizer on a unit disk, labelled by $u_\eps$, has a discrete set of $d$ interior defects, $a_i$, each of degree $d_i =+1$. The $a_i$'s are pinned to the vertices of a regular $N$-polygon inside the unit disk, with radius $r_N=
\left(\frac{N-1}{3N-1}\right)^{1/(2N)}$ where $N=|d|$. For $N=2$, the interior defects are diametrically opposite each other, with $r_2 =\frac{1}{\sqrt[4]{5}} \approx 0.67$. Notably, the radius approaches unity as $N \to \infty$. In other words, the interior defects of energy minimizers are equipartitioned so that the inter-defect distance is constant and are expelled towards the boundary as $d$ gets very large. See numerical results in Fig. \ref{fig:pl_disk}.
\end{remark}

\begin{remark} In Theorem~\ref{th:disk}, we show that $W$ attains a unique minimum when the interior defects are located at the vertices of a $N$-polygon, inside the unit disk when the $d_i=\pm 1$. If the constraint $d_i=\pm 1$ is relaxed, then $W$ may admit other minima with other constellations of interior defects, $\left\{a_i \right\}$, with $d_i \neq \pm 1$ for all $i$. We have not derived rigorous results for the locations of $\left\{a_i \right\}$ with arbitrary combinations of $d_i$'s subject to the topological constraint, $\sum d_i = d$.
\end{remark}

\begin{figure}
    \centering
\includegraphics[width=0.6\textwidth]{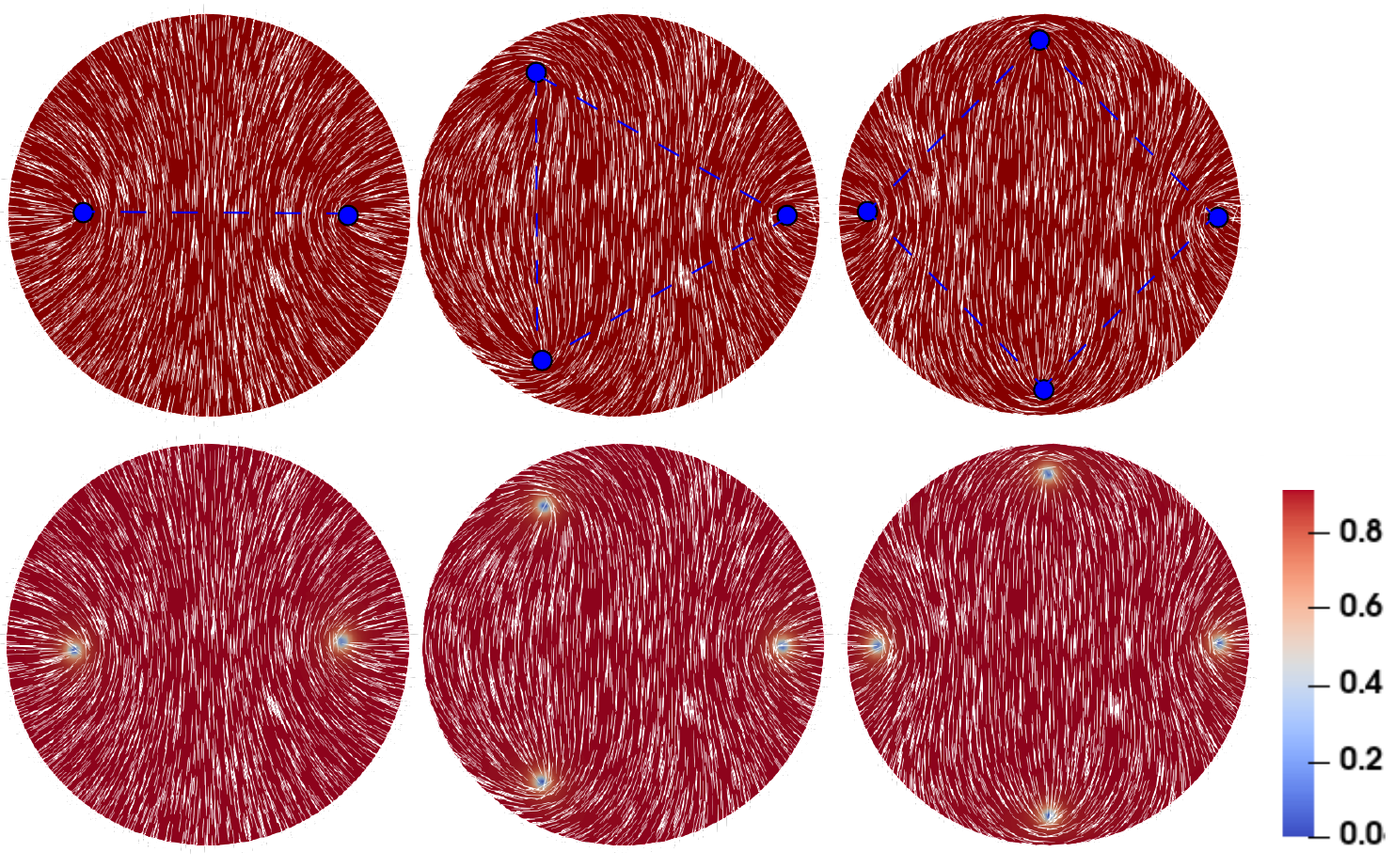}
\caption{\textbf{First row:} the limiting $N$-polygon configurations $\P_*$ related to $u_*$ by the relations \eqref{eq:uP}. The white lines follow the planar nematic director $\mathbf{n}=(\cos(angle(P_{*11},P_{*12})/2),\sin(angle(P_{*11},P_{*12})/2))$. The limiting solution $u_*$ is in \eqref{eq:u*} with the interior defects, $a_i$, determined by the $N$-polygon arrangement in Theorem \ref{th:disk}. The blue dots indicate the location of $+1/2$ defects, the dashed line indicates symmetry of defect locations, and the confined region (here a disk) is rendered with a red background as the order parameter is constant. 
\textbf{Second row:} the corresponding rLdG solution $\P$ with $\lambda^2 = 2000$, by using $\P_*$ as an initial condition. The symmetry is maintained, i.e. there are $N$ $+1/2$-defects placed on the diagonal of disk ($N = 2$) or the corners of regular polygon with $N$ edges ($N=3,4$).
The color encodes the 2D nematic order, $s = \sqrt{P_{11}^2+P_{12}^2}$, and blue represents low nematic order manifested as defects. The white lines follow the planar nematic director $\mathbf{n}=(\cos(angle(P_{11},P_{12})/2),\sin(angle(P_{11},P_{12})/2))$.
All subsequent figures use the same representation and color bar for rLdG solutions.}
\label{fig:pl_disk}
\end{figure}
\subsection{Numerical Method}
\label{sec:numerics}
We use SaddleScape V1.0 \cite{liu2026saddlescapev10pythonpackage}, a python package for constructing solution landscapes via the High-Index Saddle Dynamics (HiSD) algorithm, equipped with Nesterov's accelerated gradient method, to locate the minimum and saddle points of the renormalized energy $W$ \cite{liu2026saddlescapev10pythonpackage}. For a system with $N$ interior defects, the state variables are the complex coordinates of the defects $a_i = x_i + i y_i$. To eliminate continuous rotational degeneracy in the disk domain, the first defect is constrained to the real axis ($y_1 = 0$), reducing the state space dimension to $2N - 1$.

For a given configuration of $N$ defects, we generate $20^N$ random initial configurations to ensure a thorough exploration of the landscape. The position of the first defect $x_1$ is uniformly sampled on the interval $[-1, 1]$, while the remaining $N-1$ defects are uniformly sampled within the unit disk using $r \sim \sqrt{\mathcal{U}(0,1)}$ and $\theta \sim \mathcal{U}(0, 2\pi)$. Rather than fixing a target Morse index a priori, the algorithm dynamically determines the target index of HiSD from $k$ to $1$ for each initial point by calculating $k$, the number of negative eigenvalues of the Hessian matrix at the initial location. 
The HiSD iterations proceed with a fixed time step of $\Delta t = 10^{-3}$ and a maximum of $1000$ iterations. The search is deemed converged if the gradient norm drops below a tolerance of $10^{-5}$. If the gradient norm $||\nabla_{\mathbf{a}} W||>10$ at any point during the iterations, the current trajectory is classified as divergent and the solver skips to the next initial point.

To systematically discover all admissible critical points of $W$ for a given boundary condition, with total topological degree $d$, we parallelize the search over different combinations of individual defect degrees $d_i$. The individual degrees are restricted to $d_i \in \{1, 2, -1, -2\}$. We iterate through all possible combinations for $1 \le N \le N_{max}$, filtering those that satisfy the topological constraint $\sum_{i=1}^N d_i = d$. The independent configuration tasks are then distributed across a multi-core CPU process pool for parallel execution.

Once the critical points of $W$ are identified, we are able to reconstruct the limiting solution $u_*$ according to Eq. \eqref{eq:u*} and obtain $\P_*$ via \eqref{eq:uP} by using the optimized defect locations $a_i$ and their corresponding degrees $d_i$.
Then, $\P_*$ serves as a high-quality initial condition 
for the rLdG computation with Dirichlet boundary conditions in \eqref{eq:Pg}.
We solve the corresponding Euler-Lagrange equation in \eqref{eq:rLdG_EL} by the finite element method. 
We use the FEniCS package \cite{Logg2012FEniCS}, with a first-order Lagrange element, mesh size less than $1/64$, Newton nonlinear variational solver, an LU direct linear solver with the absolute tolerance $10^{-5}$ and the relative tolerance $10^{-3}$ along with a maximum limit of $100$ iterations.

\subsection{Numerical Results}
\label{sec:discresults}
\subsubsection{Homeotropic boundary conditions on the unit disk}
We consider the simplest case of homeotropic boundary conditions first, for which $g\times g_{\tau}\equiv 2$. Using the relations in \eqref{eq:uP},  homeotropic boundary conditions require the director to be radial or normal to the boundary. 
The condition $g\times g_{\tau}\equiv 2$ also corresponds to
tangential boundary conditions which imply that the corresponding nematic director follows the unit circle or is tangential to the boundary. The limiting profiles and the corresponding rLdG profiles for tangential and homeotropic boundary conditions have the same defect locations but the corresponding directors are mutually orthogonal to each other. 

In what follows, we compute critical points of the renormalized energy, $W$ in \eqref{eq:W_disk}, for different combinations of $d_i$ and N. The Morse index of a critical point is a measure of its instability or the number of its unstable directions. In particular, a local minimum is an index-$0$ critical point. We label the Morse index of a critical point of $W$ by $Windex$. The critical points of $W$ are used to construct limiting solutions $u_*$ as above, which then yield high-quality initial conditions $\P_*$ for the rLdG solver. The Morse index of the corresponding critical point of the rLdG energy in \eqref{eq:p_energy} (when numerically computed) is referred to as $index$. It is not guaranteed that every critical point of $W$ corresponds to a critical point of the rLdG energy in \eqref{eq:p_energy}.  
 With $N_{max}=4$ and restricting $|d_i|\leq 2$, we explore the following $7$ valid topological combinations of $d_i$:
\begin{itemize}
    \item $N=1$: $\mathbf{d} = (2)$.
    \item $N=2$: $\mathbf{d} =(1, 1)$.
    \item $N=3$: $\mathbf{d} = (2, 1, -1)$, $\mathbf{d} = (2, 2, -2)$.
    \item $N=4$: $\mathbf{d} = (1, 1, 1, -1)$, $\mathbf{d} = (2, 1, 1, -2)$, $\mathbf{d} = (2, 2, -1, -1)$.
\end{itemize}
We find three critical points of $W$ in total. For $N = 1$ and $\mathbf{d} = (2)$, there is one minimizer of $W$ corresponding to the radial solution with a $+2$ defect at the center of the disk. This is analytically shown to be the unique minimizer of the corresponding renormalized energy in the proof of Theorem VIII.6 in \cite{bethuel1994ginzburg}. In Fig.~\ref{fig:disk_2}(a), we plot the corresponding limiting rLdG profile, $\P_*$ constructed from the limiting solution $u_*$. The central defect in the rLdG model has degree $+1$, since the GL degree is twice the topological degree of the rLdG counterparts. Similar remarks apply to the remainder of this discussion.  For $N =2$ and $\mathbf{d} = (1, 1)$, we find two critical points of $W$. The minimizer of $W$ is the polar solution derived in Theorem~\ref{th:disk} with two $+1/2$ defects at symmetric positions on a diameter of the disk. The limiting solution $\P_*$ and corresponding rLdG solution are shown in Fig. \ref{fig:disk_2}(b). The second critical point is an index-$1$ critical point; the corresponding limiting solution $\P_*$ has a $+1/2$ defect at the disk center and another $+1/2$ defect away from the center (Fig. \ref{fig:disk_2}(c)). We do not find a rLdG solution corresponding to this limiting solution. We do not obtain any other critical points for other choices of $N$ and $\mathbf{d}$. 

\begin{figure}[ht]
    \centering
\includegraphics[width=\textwidth]{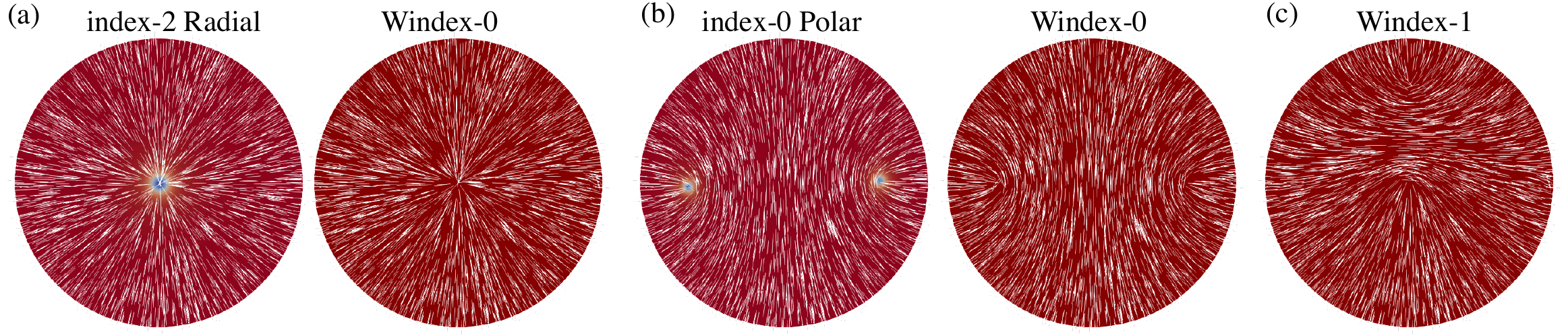}
\caption{(a) Radial solution: The limiting solution $\P_*$ corresponding to a minimizer of $W$ with $N=1$, $d=\left\{2 \right\}$ (right) and corresponding rLdG solution with $\lambda^2 = 2000$ (left). (b) Polar solution: The limiting solution $\P_*$ corresponding to a minimizer of $W$ with $N=2$, $d=\left\{1,1 \right\}$ (right) and corresponding rLdG solution with $\lambda^2 = 2000$ (left). (c) The limiting solution $\P_*$ corresponding to an index-$1$ critical point of $W$ with $N=2$, $d=\left\{1,1 \right\}$. The profiles of the limiting solutions, $\P_*$, constructed from the critical points of the renormalized energy $W$ in \eqref{eq:W_disk}.}
\label{fig:disk_2}
\end{figure}

\subsubsection{Other boundary conditions}

According to Remark \ref{rk:opposite_sign},  the renormalized energy 
$W$ has the same structure if we reverse the signs of $d$ and all the $d_i$'s (the winding numbers of the interior defects). Therefore, the corresponding limiting solutions have the same set of interior defects. We illustrate this with the cases $d_1=1$ and $d_1=-1$. The resulting minima of $W$ have one central point defect (note that $r_N=0$ for $N=1$, $d_1=1$ ($d_1 = -1$) in Theorem~\ref{th:disk}) and the corresponding limiting solutions, $\P_*$ have one $+1/2$ ($-1/2$) central defect respectively. The limiting solutions converge to the corresponding rLdG minimizers are shown in the Fig. \ref{fig:disk_+1_-1}.
\begin{figure}[ht]
    \centering
\includegraphics[width=0.4\textwidth]{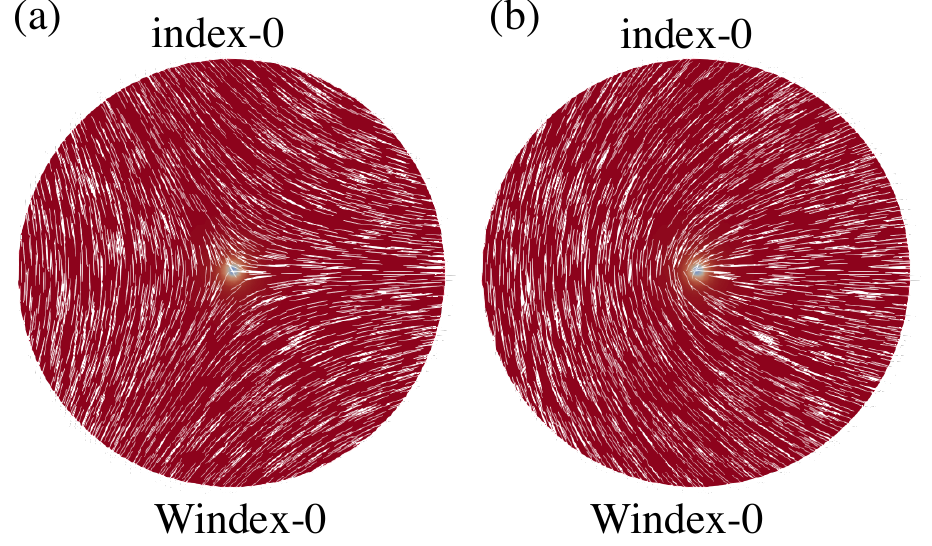}
\caption{The rLdG solutions with $\lambda^2 = 2000$ for (a) $g\times g_{\tau}\equiv -1$ and (b) $g\times g_{\tau}\equiv +1$.}
\label{fig:disk_+1_-1}
\end{figure}

Consider critical points of the rLdG energy in \eqref{eq:p_energy} with $\mathbf{d} = (1,1,1)$ and $g\times g_{\tau}\equiv3$ as an example shown in Fig. \ref{fig:disk_rotation}. These critical points are constructed from a critical point of $W$ with three interior defects, $\left\{a_1, a_2, a_3 \right\}$, each with $d_i=+1$, located at the vertices of an equilateral triangle as in Theorem~\ref{th:disk}.
According to Remark \ref{rk:rotation_invariance}, we have rotation invariance for the renormalized energy, $W$, but the corresponding limiting solutions do not have rotation invariance. Hence, there are infinitely many limiting solutions with the same relative defect positions and winding numbers. 
\begin{figure}[ht]
    \centering
\includegraphics[width=0.8\textwidth]{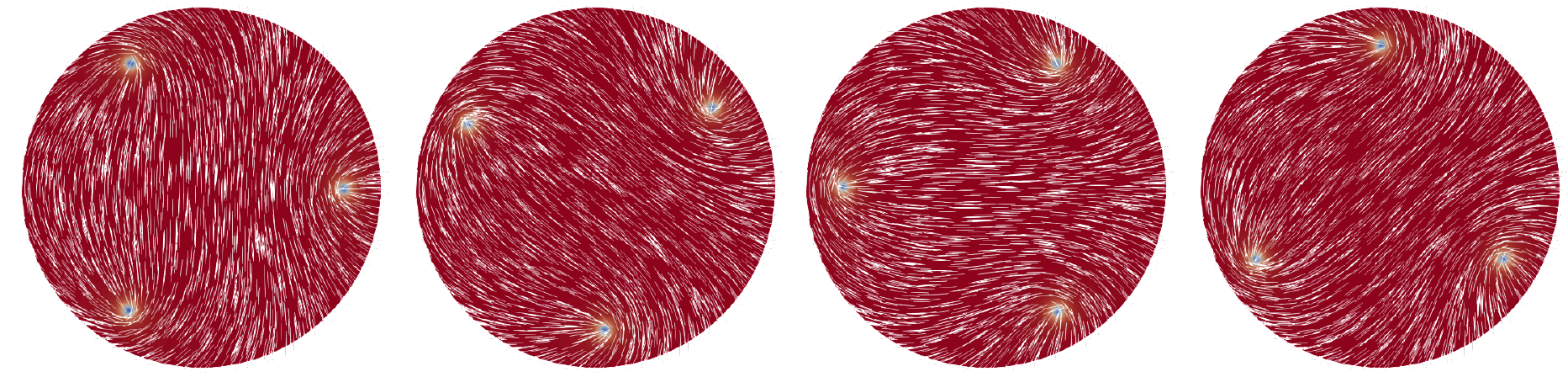}
\caption{Numerically computed rLdG solutions for $\lambda^2=2000$ and with different initial conditions, constructed from different critical points of $W$ obtained by rotating the positions of the three 
$+1/2$ defects (or defects of degree $d_i=+1$ for $u_*$ in \eqref{eq:u*}).}
\label{fig:disk_rotation}
\end{figure}

Next, consider $g\times g_{\tau}\equiv 3$ and $4$. For $g\times g_{\tau}\equiv 3$ in Fig. \ref{fig:disk_3_4}(a), we find three rLdG solutions constructed by means of three distinct limiting solutions, $\P_*$, with different critical points of $W$. For $\mathbf{d}=\left(1,1,1 \right)$, $W$ attains its minimum when the three interior defects are located at the vertices of a triangle as defined in Theorem~\ref{th:disk}. The corresponding rLdG solution is index-$0$ with three $+1/2$ defects located at the corners of a regular triangle. We also find an index-$1$ rLdG solution with one central $+1/2$ defect and two $+1/2$ defects at the symmetric positions on a diameter of the disk, and an index-$2$ rLdG solution with a $+1/2$ defect and a $+1$ defect on the diameter of the disk. By rotating the positions of these two defects, we obtain different solutions. For the $+1$ defect, the local structure near the defect may be star-like, spiral-like, or circular. To avoid repetition, a figure is not included to this effect. For $g\times g_{\tau}\equiv 4$ in Fig. \ref{fig:disk_3_4}(b), we find three rLdG solutions, again constructed by means of different limiting solutions $\P_*$ obtained via different critical points of $W$ : a minimizing solution with four $+1/2$ defects located at the corners of a square which corresponds to the $4$-polygon configuration in Theorem \ref{th:disk}; an index-$2$ rLdG solution with one $+1$ defect and two $+1/2$ defects at the corners of an isosceles triangle; and an index-$4$ rLdG solution with two $+1$ defects at symmetric positions on a diameter of the disk. 

\begin{figure}[ht]
    \centering
\includegraphics[width=\textwidth]{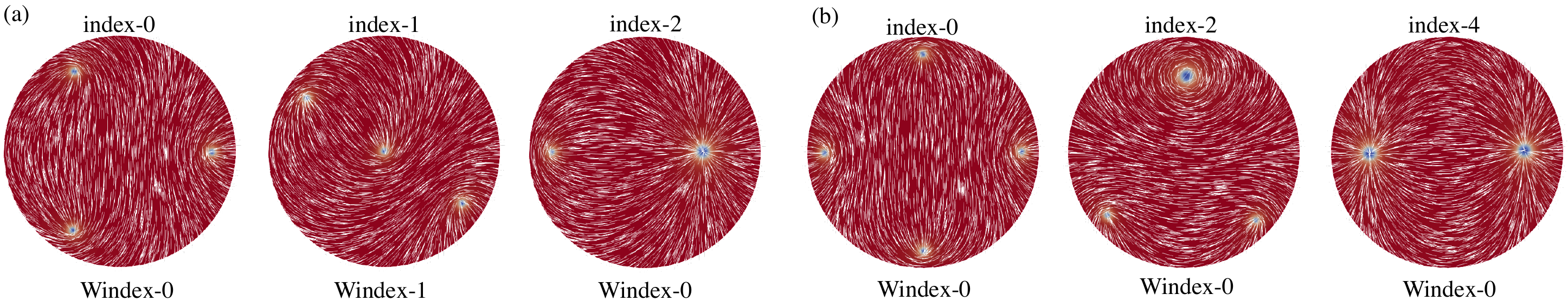}
\caption{The rLdG solutions with $\lambda^2 = 2000$ for (a) $g\times g_{\tau}\equiv 3$ and (b) $g\times g_{\tau}\equiv4$.}
\label{fig:disk_3_4}
\end{figure}

\section{Non-smooth Domain: Polygonal domains}\label{sec:polygon}

Let $\Om\subset\R^2\simeq\C$ be a bounded simply connected polygonal or curvilinear polygonal domain. Its boundary consists of finitely many open $C^2$ arcs
\[
  \Gamma_1,\ldots,\Gamma_K,
\]
meeting at vertices
\[
  V=\{b_1,\ldots,b_K\}.
\]
The interior angle at $b_j$ is denoted by $\alpha_j\in(0,2\pi)$. The Dirichlet boundary condition is
\begin{equation}\label{eq:gdatum}
  g=\rho_b e^{i\varphi_b}\quad \text{on }\partial\Om\setminus V,
\end{equation}
where $0\le \rho_b\le 1$, $\rho_b^{-1}(0)=V$, and $\varphi_b$ is a real $C^2$ lift on each open side $\Gamma_j$, $j=1,\ldots,K$. 
The modulus has algebraic zero of order $\alpha_j>0$:
\begin{equation}\label{eq:modulus-zero}
  \rho_b(x)=|x-b_j|^{\alpha_j}h_j^\pm(x),
  \qquad h_j^\pm(b_j)>0,
\end{equation}
along the two sides adjacent to $b_j$, where $h_j^{\pm}$ are $C^2$ up to $b_j$ on the two adjacent sides.
The admissible set is
\begin{equation}
H_{g}^1 = \{u\in H^1(\Omega;\mathbb{C});u = g \ on\ \partial \Omega\}.
\end{equation} 

\begin{definition}\label{def:adm}
A family $u_\eps$ of critical points of \eqref{eq:GL_energy} is deemed to be reduced admissible if, 
\begin{equation}\label{eq:uv}
u_{\eps} = v_{\eps}\rho_{\eps},
\end{equation}
$\rho_\eps$ is the scalar solution of
\begin{align}\label{eq:rho-pde}
  -\Delta\rho_\eps&=\eps^{-2}\rho_\eps(1-\rho_\eps^2),\ in\ \Omega\\
\rho_\eps&\geq 0,\ in\ \Omega\\
  \rho_\eps &= \rho_b\quad \text{on }\partial\Om\setminus V,
\end{align}
and the normalized map $v_{\eps}$ satisfies
\begin{equation}\label{eq:reduced}
  -\divop(\rho_\eps^2\nabla v_\eps)=
  \eps^{-2}\rho_\eps^4v_\eps(1-|v_\eps|^2).
\end{equation}
Hence, $v_{\eps}$ is a critical point of $F_\eps$, where
\begin{equation}\label{eq:Feps}
  F_\eps(v)=\frac12\int_\Om \rho_\eps^2|\nabla v|^2\,dx+
  \frac1{4\eps^2}\int_\Om \rho_\eps^4(1-|v|^2)^2\,dx.
\end{equation}
We assume there is $C$ independent of $\eps$ such that
\begin{equation}\label{eq:log}
  F_\eps(v_\eps)\le C\ln\frac1\eps.
\end{equation}
We also assume the
vertex logarithmic tightness condition which stipulates that  for every $j$, there exist $r_0>0$ and $C_j>0$
such that
\begin{equation}
  \int_{S_j(r_0)\setminus S_j(r)}
  \rho_\eps^2|\nabla v_\eps|^2\,d x
  \leq C_j\left(1+\ln\frac{r_0}{r}\right),
  \qquad \eps\leq r<r_0,
  \label{eq:vertex-log-tightness}
\end{equation}
where $S_j(r)$ denotes the
radius-$r$ vertex sector around $b_j$. 
This condition rules out zero-trace negative sector modes concentrating at a vertex and hence helps to identify the limit as a proper canonical map.
\end{definition}

\begin{theorem}\label{thm:main}
Let $\Omega$ and $g=\rho_be^{i\varphi_b}$ satisfy the basic assumptions above. 
Let $u_\eps$ be a family of reduced admissible  critical points of 
\begin{equation}\label{eq:energy}
  E_\eps(u)=\frac12\int_\Om |\nabla u|^2\,dx+
  \frac1{4\eps^2}\int_\Om (1-|u|^2)^2\,dx,
\end{equation} 
which is defined in Definition \ref{def:adm}.

Then there is a subsequence, $\left\{u_{\eps_\ell} \right\}$ with $\eps_\ell \to 0$ as $l\to \infty$, a set of distinct interior points $a_1,\ldots,a_N\in\Omega$, non-zero integers $d_i$, $i=1,\ldots, N$, and 
$c_j \in \mathbb{R}, j=1\ldots K$ such that
$\left\{v_{\eps_\ell}\right\}$ converges locally smoothly away from the singular set
$A=\{a_1,\ldots,a_N\}$ and the vertex set $V=\{b_1,\ldots,b_K\}$ to a canonical harmonic map
\[
  v_{\eps_\ell}\to u_*
  \quad\text{in }C^k_{\loc}(\Omega\setminus(A\cup V))
  \quad\forall k\ge0,
\] 
where $v_{\eps_\ell}$ satisfies \eqref{eq:reduced} and is related to $u_{\eps_\ell}$ as in \eqref{eq:uv} with $\eps =\eps_\ell$.

The canonical harmonic map is labelled by the set $\mathcal C=(\mathbf{a},\mathbf{c},\mathbf{d})$.
We quantify the corner contribution of winding number $c_j$ by the negative value of the rotation of the director field around the corner $b_j$ between the two adjacent edges $\Gamma_{j+1}$ and $\Gamma_j$. From the Poincaré-Hopf Theorem, $\mathbf{d}$ and $\mathbf{c}$ satisfy
\begin{equation}\label{eq:topological constraint}
  \sum_{i=1}^N d_i
  =  \frac1{2\pi}\left(
  \sum_{j=1}^K\int_{\Gamma_j}\partial_\tau\varphi_b\,ds\right) + \sum_{j =1}^K c_j.
\end{equation}
The limiting harmonic map, $u_*$, is then given by \begin{equation}\label{eq:main-canonical}
  u_*(z)=\e^{i\phi}
  \prod_{i=1}^N\left(\frac{z-a_i}{|z-a_i|}\right)^{d_i}
  \prod_{j=1}^K\left(\frac{z-b_j}{|z-b_j|}\right)^{-\frac{2\pi c_j}{\alpha_j}},
\end{equation}
where $\alpha_j$ is the inner angle of $b_j$, 
and $\phi$ is the harmonic solution of the zero-degree boundary problem in 
\begin{equation}\label{eq:H}
\begin{cases}
  \Delta \phi = 0,
  & \text{in }\Omega,\\[2mm]
  \phi = \phi_0,
  & \text{on }\partial\Omega.
\end{cases}
\end{equation} 
where

\begin{equation}
e^{i\phi_0(z)} = e^{i\varphi_b(z)}\frac{|z-a_1|^{d_1}}{(z-a_1)^{d_1}}\ldots\frac{|z-a_N|^{d_N}}{(z-a_N)^{d_N}}\frac{|z-b_1|^{-\frac{2\pi c_1}{\alpha_1}}}{(z-b_1)^{-\frac{2\pi c_1}{\alpha_1}}}\ldots\frac{|z-b_K|^{-\frac{2\pi c_K}{\alpha_K}}}{(z-b_K)^{-\frac{2\pi c_K}{\alpha_K}}}
\end{equation} and
 $u_*(z) = e^{i\phi_0}\frac{(z-a_1)^{d_1}}{|z-a_1|^{d_1}}\ldots\frac{(z-a_N)^{d_N}}{|z-a_N|^{d_N}} \frac{(z-b_1)^{-\frac{2\pi c_1}{\alpha_1}}}{|z-b_1|^{-\frac{2\pi c_1}{\alpha_1}}}\ldots\frac{(z-b_K)^{-\frac{2\pi c_K}{\alpha_K}}}{|z-b_K|^{-\frac{2\pi c_K}{\alpha_K}}}=e^{ i \varphi_b(z)}$ on $\partial \Omega$. 

In addition, we have 
\begin{equation}
\nabla\left(u_*(z)\frac{|z-a_j|^{d_j}}{(z-a_j)^{d_j}}\right)(a_j) = 0,\  j = 1,\ldots,N,
\end{equation}
which expresses that the vector of interior defect locations, $(a_1,\ldots,a_N)$, is a critical point of some appropriate renormalized energy, i.e.,
\begin{equation}\label{eq:criticality}
  \nabla_{a_i}W(\mathcal C)=0,
  \qquad i=1,\ldots,N.
\end{equation}
Here the renormalized energy is defined by 
\begin{align}
\label{eq:W_b}
  W(\mathcal C)
  &=\frac12 \int_{\partial\Om\backslash V}\Phi_0q\,ds
    -\pi\sum_{i=1}^Nd_iR_0(a_i)
    +\pi\sum_{j=1}^K c_jR_0(b_j) \\
&-\pi \sum_{1\leq i\neq j\leq N} d_i d_j \ln|a_i-a_j|  - \pi \sum_{1\leq i\neq j\leq K} \frac{2\pi}{\alpha_j} c_i c_j \ln|b_i-b_j|\\
&+ \pi\sum_{i = 1}^N\sum_{j = 1}^K \left(\frac{2\pi}{\alpha_j}+1\right) d_i c_j \ln|b_j-a_i|\label{eq:W_b_end}
\end{align}
where \begin{equation}
\label{eq:Phi0-definition}
  \Phi_0(x)=\sum_{i=1}^N d_i\ln|x-a_i|-
  \sum_{j=1}^K\frac{2\pi c_j}{\alpha_j}\ln|x-b_j|+R_0(x),
\end{equation}
and
\begin{align}
&\Delta R_0 = 0,\ \text{on}\ \Omega\label{eq:108}\\
&\frac{\partial R_0}{\partial \nu} =  q + \sum_{j = 1}^K \frac{2\pi}{\alpha_j} c_j \frac{\partial \ln|z-b_j|}{\partial \nu} - \frac{\partial \sum_{i = 1}^N d_i \ln|z-a_i|}{\partial \nu}\ \text{on}\ \partial \Omega\backslash V,\label{eq:108-1}
\end{align}
and
\begin{equation}
q:=\e^{i\varphi_b}\times\partial_\tau\e^{i\varphi_b}=\partial_\tau\varphi_b\ on\ \partial\Om\setminus V.
\end{equation}
\end{theorem}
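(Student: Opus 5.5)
The plan is to transplant the Bethuel--Brezis--H\'elein (BBH) compactness argument for critical points (as in the Proposition of Section~\ref{sec:bbh}, and Lin's extension \cite{lin1995solutions}) to the reduced problem \eqref{eq:reduced}. I will treat the weight $\rho_\eps$ as known, and then compute the renormalized energy by an explicit Green's-function bookkeeping.

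\emph{Step 1: the modulus $\rho_\eps$.} For the scalar problem \eqref{eq:rho-pde}, I will show that $\rho_\eps\to 1$ locally uniformly in $\overline{\Om}\setminus V$, with $\eps$-independent $C^k$ bounds away from $V$. The tools are the maximum principle ($0\le\rho_\eps\le1$) together with a barrier and boundary-layer argument. Near each $b_j$, the zero of $\rho_b$ of order $\alpha_j$ in \eqref{eq:modulus-zero} forces a vertex layer: after blow-up at scale $\eps$, $\rho_\eps$ behaves like a sector solution vanishing on the two edges. This gives $\rho_\eps\ge c>0$ outside $\bigcup_j S_j(C\eps)$. Consequently $F_\eps$ in \eqref{eq:Feps} is uniformly comparable to the standard GL energy of $v_\eps$ on compact subsets of $\overline{\Om}\setminus V$, and \eqref{eq:reduced} is uniformly elliptic there.

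\emph{Step 2: interior vortices and compactness.} Using the bound \eqref{eq:log}, I will run the BBH covering by bad discs, which gives finitely many discs of radius $\lambda\eps$ on which $|v_\eps|\le 1/2$. Combined with a Pohozaev identity on discs in $\Om$ (the weight $\rho_\eps$ is nearly constant there, so the error terms are $o(1)$), this yields $\eps^{-2}\int(1-|v_\eps|^2)^2\le C$ locally. The bad discs then cluster, after extracting a subsequence, at finitely many points $a_i$ with degrees $d_i\neq 0$. Away from $A\cup V$, the standard $W^{1,p}$, $p<2$, bounds and the elliptic estimates for \eqref{eq:reduced} give $v_\eps\to u_*$ in $C^k_{\loc}$, where $u_*$ is an $S^1$-valued harmonic map, so $u_*\times\nabla u_*$ is divergence- and curl-free up to Dirac masses $2\pi d_i\delta_{a_i}$.

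\emph{Step 3: the vertex contribution.} This is the main obstacle. I must show that the singularity of $u_*$ at $b_j$ has the form $\left((z-b_j)/|z-b_j|\right)^{-2\pi c_j/\alpha_j}$ and nothing worse. My first attempt will be to expand the phase $\psi$ of $u_*$ in the sector at $b_j$ in separated modes $r^{\pm k\pi/\alpha_j}\sin(k\pi\theta/\alpha_j)$, plus a $\log r$ mode and a linear-in-$\theta$ mode. The vertex logarithmic tightness \eqref{eq:vertex-log-tightness} kills every negative-power mode, because such a mode has energy growing like a power of $1/r$. It leaves only the terms $c\,\theta$ and $\log r$. The boundary data on the two edges fix the jump in $\theta$, which defines $c_j$; this yields the factor above, up to a harmonic remainder that is bounded at $b_j$. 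The sum rule \eqref{eq:topological constraint} then follows by integrating $\partial_\tau\psi$ around $\partial(\Om\setminus\bigcup B_r(a_i)\cup S_j(r))$ and letting $r\to0$.

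\emph{Step 4: canonical form and the renormalized energy.} Given the singular structure from Steps 2--3, I set $\phi=\psi-\sum_i d_i\arg(z-a_i)+\sum_j (2\pi c_j/\alpha_j)\arg(z-b_j)$. This $\phi$ is single-valued and harmonic, and matching the boundary data gives \eqref{eq:H}, hence \eqref{eq:main-canonical}. Writing $u_*\times\nabla u_*=-\nabla^\perp\Phi_0$, with $\Phi_0$ as in \eqref{eq:Phi0-definition} and $R_0$ solving the Neumann problem \eqref{eq:108}--\eqref{eq:108-1} (solvable by \eqref{eq:topological constraint}), I will compute
\[
\tfrac12\int_{\Om\setminus(\bigcup B_r(a_i)\cup\bigcup S_j(r))}|\nabla\Phi_0|^2 .
\]
After integrating by parts and subtracting the divergent terms $\pi d_i^2\ln(1/r)$ and the vertex analogues, the finite part is exactly $W(\mathcal C)$ in \eqref{eq:W_b}--\eqref{eq:W_b_end}; this is a routine expansion of the logarithms at each singular point. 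Finally, criticality \eqref{eq:criticality} follows as in BBH Chapter VIII. I will apply the Pohozaev/stress-tensor identity for \eqref{eq:reduced} on small discs around each $a_i$, where $\rho_\eps\approx1$, and pass to the limit. This shows that the regular part $u_*\left(|z-a_i|/(z-a_i)\right)^{d_i}$ has vanishing gradient at $a_i$, which is equivalent to $\nabla_{a_i}W=0$.
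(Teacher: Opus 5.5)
The main gap is the boundary away from the vertices, which your plan never controls, and the conclusion depends on it.

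Step 1 is false as stated. Since $\rho_b<1$ on every edge near every vertex (indeed $\rho_\eps=\rho_b\sim|x-b_j|^{\alpha_j}$ there by \eqref{eq:modulus-zero}), $\rho_\eps$ has an $\eps$-boundary layer with $|\nabla\rho_\eps|\sim\eps^{-1}$. So $\rho_\eps$ does not converge to $1$ on $\overline{\Omega}\setminus V$, has no uniform $C^k$ bounds up to the edges, and is not bounded below outside $\bigcup_j S_j(C\eps)$. What survives is interior convergence and, by superharmonicity, $\rho_\eps\ge$ the harmonic extension of $\rho_b$. That gives a uniform lower bound on compact subsets of $\overline{\Omega}\setminus V$.

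Your Step 2 (covering plus Pohozaev on discs in $\Omega$) therefore controls vortices only on compact subsets of $\Omega$. Nothing stops bad discs from clustering at a point $x_0$ in the relative interior of an edge, and \eqref{eq:vertex-log-tightness} is assumed only in the vertex sectors. The bound \eqref{eq:log} does not exclude this. A limit whose phase near $x_0$ is $2k\theta+O(1)$ on the half-disc, i.e. $u_*\sim\bigl((z-x_0)/|z-x_0|\bigr)^{2k}$, costs only $2\pi k^2\ln(r_0/r)$ on half-annuli, which is compatible with $C\ln(1/\eps)$. If this occurred, the $a_i$ would not all lie in $\Omega$, \eqref{eq:main-canonical} would need an extra boundary factor, and \eqref{eq:topological constraint} would acquire extra integers.

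Independently, Steps 3 and 4 use $u_*=e^{i\varphi_b}$ on $\partial\Omega\setminus V$. You need it to fix the jump at $b_j$, to integrate $\partial_\tau\psi$ along the edges, and to impose $\phi=\phi_0$ and the Neumann data for $R_0$. You have no convergence up to the edges outside $\bigcup_j S_j(r_0)$ to justify this trace.

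BBH handle this with the global Pohozaev identity on a star-shaped domain, which bounds $\int_{\partial G}|\partial_\nu u_\eps|^2$ and the potential up to the boundary. That route is closed here. The polygon need not be star-shaped, and for $u_\eps$ the boundary term $\eps^{-2}\int_{\partial\Omega}(x\cdot\nu)(1-\rho_b^2)^2$ is of order $\eps^{-2}$. You need a local argument at edge points, for instance:
\begin{itemize}
\item a Pohozaev or stress-tensor identity for the reduced equation \eqref{eq:reduced} on half-discs $B_r(x_0)\cap\Omega$ centred on a straight edge, where $(x-x_0)\cdot\nu=0$ and $|v_\eps|=1$ on the edge, with the $\nabla\rho_\eps$ terms from the layer controlled;
\item combined with criticality, to show that vortex clusters cannot reach the edge;
\item followed by a uniform $H^1$ bound on $v_\eps$ near compact pieces of $\partial\Omega\setminus V$, to identify the trace.
\end{itemize}
Alternatively, you must add a tightness-type hypothesis at every boundary point.

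With that in place, the rest holds up. Step 3 is exactly the role the paper assigns to \eqref{eq:vertex-log-tightness}. Fatou transfers the bound to $u_*$, and the lower bound on $\rho_\eps$ on closed annular sectors gives the edge traces inside $S_j(r_0)$. Note that the $\log r$ mode you list is excluded by the Dirichlet data, not by tightness. Your finite-part computation does reproduce \eqref{eq:W_b}--\eqref{eq:W_b_end}. The arcs at $b_j$ supply $\pi c_jR_0(b_j)$, the vertex--vertex term, and the $+1$ in $\bigl(\frac{2\pi}{\alpha_j}+1\bigr)$. The discs at $a_i$ supply the $\frac{2\pi}{\alpha_j}$ part.
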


\begin{remark}The proof of Theorem \ref{thm:main} can be found in the preprint \cite{han2026asymptotictheoryginzburglandaucritical}.
\end{remark}

\begin{remark} Theorem \ref{thm:main} shows that reduced admissible critical points of the GL energy can be characterized by critical points of a suitably defined renormalized energy, in the $\eps \to 0$ limit. 
The critical points of $W$ might not correspond to a critical point of GL energy. In what follows, we follow the same strategy as in Section~\ref{sec:disk}, i.e., use the critical points of $W$ on a polygon to construct limiting harmonic maps $u_*$ of the form \eqref{eq:main-canonical}, which then yield high-quality initial conditions, $\P_*$ for the rLdG solver via the relations \eqref{eq:uP}. 
\end{remark}
\begin{remark} The renormalized energy for polygonal or curvilinear polygonal domains in \eqref{eq:W_b} can be reduced to the renormalized energy for smooth domains in \eqref{eq:Wa} by eliminating the $b_j$ terms and by removing all consideration to the vertex set $V$ in the boundary integrals. 
\end{remark}

\begin{proposition}\label{prop:tangentialrenormalizedenergy}
For tangential or homeotropic boundary conditions on a polygon with straight edges (so that the argument of $\varphi_b$ or equivalently $\gamma$ in \eqref{eq:uP} is constant on a straight line), the renormalized energy $W$ in \eqref{eq:W_b}-\eqref{eq:W_b_end} can be written as 
\begin{align} 
W(\mathcal C) =& -\pi \sum_{1\leq i\neq j\leq N} d_i d_j \ln \left| (1 - w_i \overline{w}_j)(w_i - w_j) \right| \label{eq:l1}\\
& -\pi \sum_{i=1}^N d_i^2 \ln \left( 1 - |w_i|^2 \right) \label{eq:l2}\\
&+ \pi \sum_{i=1}^N \sum_{m=1}^K d_i^2 \left(\frac{\alpha_m}{\pi} - 1\right) \ln \left| w_i - w^{(c)}_m \right|  + \pi \ln|C_{SC}
| \sum_{i=1}^N d_i^2  \label{eq:l3}\\
& + 4\pi \sum_{i=1}^N \sum_{m=1}^K d_i c_m \ln \left| w_i - w^{(c)}_m \right|  \label{eq:l4}\\
& - 2\pi \sum_{1\leq m\neq l\leq K} c_m c_l \ln \left| w^{(c)}_m - w^{(c)}_l \right|  \label{eq:l5}\\
&+ \sum_{m=1}^K \frac{2\pi^2}{\alpha_m} c_m^2 \ln |\mathcal{A}_m|.\label{eq:l7}
\end{align} 
This renormalized energy only depends on the pre-image locations, $\mathbf{w} = f^{-1}(\mathbf{a})$, of the interior defects and the fixed pre-images, $\mathbf{w}^{(c)} = f^{-1}(\mathbf{b})$, of the polygon vertices, where $f$ is the Schwarz-Christoffel mapping, $z = f(w)$, that transforms the calculation on the polygonal domain $\Omega$ in the $z$-plane to the unit disk $D$ in the $w$-plane. 
\end{proposition}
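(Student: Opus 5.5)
The plan is to transplant the renormalized energy \eqref{eq:W_b}--\eqref{eq:W_b_end} to the unit disk via the Schwarz--Christoffel map $z=f(w)$,
\[
f'(w)=C_{SC}\prod_{m=1}^K\bigl(w-w^{(c)}_m\bigr)^{\alpha_m/\pi-1},
\]
and then evaluate each term with the disk formulas of Lemma~\ref{prop:renormalized}.

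The starting observation is that on each straight edge $\varphi_b$ is constant, so $q=\partial_\tau\varphi_b=0$ on $\partial\Om\setminus V$. Consequently the boundary integral $\frac12\int\Phi_0 q\,ds$ vanishes. In \eqref{eq:108-1}, the Neumann datum of $R_0$ reduces to the normal derivatives of the logarithmic singular parts. In other words, $\Phi_0$ is harmonic in $\Om\setminus A$, satisfies $\partial_\nu\Phi_0=0$ on the open edges, and has prescribed logarithmic singularities at the $a_i$ and $b_j$. I will therefore identify $\Phi_0$ as a Neumann Green-type function and compute it on the disk.

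Conformal invariance of the Laplacian and of the zero-Neumann condition means $\Phi_0\circ f$ is harmonic in $D\setminus f^{-1}(A)$ with zero Neumann data on $\partial D$ away from the prevertices. Near $w_i$, it still carries $d_i\ln|w-w_i|$ plus a regular part $d_i\ln|f'(w_i)|$. Near $w^{(c)}_m$ the local behaviour is $|z-b_m|\sim|w-w^{(c)}_m|^{\alpha_m/\pi}$, so $-\frac{2\pi c_m}{\alpha_m}\ln|z-b_m|$ becomes $-2c_m\ln|w-w^{(c)}_m|$ plus a constant involving $\mathcal A_m$, the leading coefficient in $f(w)-b_m\sim\mathcal A_m(w-w^{(c)}_m)^{\alpha_m/\pi}$. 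Using the identity from the proof of Lemma~\ref{prop:renormalized} (the Neumann function of the disk is $\ln|w-w_i|+\ln|1-w\bar w_i|$), I will write explicitly
\[
\Phi_0\circ f(w)=\sum_i d_i\bigl(\ln|w-w_i|+\ln|1-w\bar w_i|\bigr)-\sum_m 2c_m\ln|w-w^{(c)}_m|+\text{const}.
\]
On the boundary, the $c_m$ singularities need no reflection partner. Matching these expansions determines $R_0(a_i)$ and $R_0(b_j)$ as the finite parts of $\Phi_0$ minus the $z$-plane logarithms.

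Next I substitute these values into \eqref{eq:W_b}--\eqref{eq:W_b_end} and convert every $z$-plane logarithm into $w$-plane quantities:
\begin{itemize}
\item For $i\neq j$, the terms $\ln|a_i-a_j|$ combine with the reflected terms to give \eqref{eq:l1}.
\item The diagonal self-interaction $\ln|1-|w_i|^2|$ gives \eqref{eq:l2}.
\item The expansion $\ln|f'(w_i)|$, which arises when $\ln|z-a_i|$ is compared with $\ln|w-w_i|$ in $R_0(a_i)$, produces exactly \eqref{eq:l3}, including the $\ln|C_{SC}|$ term.
\item The cross terms between $d_i$ and $c_m$ collect into \eqref{eq:l4}. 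The coefficient $4\pi$ arises as $2\cdot 2\pi$, from the $w$-plane weight $2c_m$ together with the symmetric cross-term counting.
\item The vertex--vertex terms and the corner constants give \eqref{eq:l5} and \eqref{eq:l7}.
\end{itemize}
Finally, I drop additive constants independent of $\mathbf w$ (the constant in $R_0$ cancels by the sum rule \eqref{eq:topological constraint}).

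The main obstacle is the careful bookkeeping at the vertices: identifying the regular parts $R_0(b_j)$, and checking that the $z$-plane weights $\frac{2\pi}{\alpha_j}$ and $\left(\frac{2\pi}{\alpha_j}+1\right)$ in \eqref{eq:W_b}--\eqref{eq:W_b_end} reassemble into the stated $w$-plane coefficients. In particular, this requires showing that the constants $\ln|\mathcal A_m|$ appear with weight $\frac{2\pi^2}{\alpha_m}c_m^2$. I would first verify the formula for $K=0$ (a disk, recovering \eqref{eq:W_disk}), and then for a single vertex, before doing the general bookkeeping.
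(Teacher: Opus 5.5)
Your plan is correct and is essentially the paper's proof. The paper also pulls back by the Schwarz--Christoffel map and solves the disk Neumann problem explicitly using $\partial_\mu(\ln|w-w_i|+\ln|1-w\bar w_i|)=1$, $\partial_\mu\ln|w-w_m^{(c)}|=\tfrac12$ and the sum rule (its $\tilde R_0$ is exactly your $\Phi_0\circ f$ minus the $z$-plane logarithms). It then evaluates $R_0(a_k)$ via $f'(w_k)$ and $R_0(b_m)$ via $\mathcal A_m$, and substitutes into $W$.

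One small correction to your expected bookkeeping. The $z$-plane logarithms with weights $\frac{2\pi}{\alpha_j}$ and $\frac{2\pi}{\alpha_j}+1$ do not reassemble; they cancel identically against those produced by $R_0(a_i)$ and $R_0(b_j)$. The $4\pi$ in \eqref{eq:l4} is $2\pi$ from the weight $2c_m$ inside $R_0(a_i)$, plus $2\pi$ from the boundary doubling $|1-w_m^{(c)}\bar w_i|=|w_m^{(c)}-w_i|$ inside $R_0(b_m)$.
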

\begin{proof}
For tangential or homeotropic boundary conditions, we have $q =\partial_{\tau}\varphi_b\equiv 0$ on $\partial\Omega\backslash V$, where $\Omega$ is a two-dimensional polygon domain. To compute $R_0(a_i)$ and $R_0(b_j)$, we introduce the Schwarz-Christoffel mapping
\begin{equation}\label{eq:f}
f(w) = C_{SC} \int_{0}^{w} \prod_{m=1}^K \left(1 - \frac{\zeta}{w_m^{(c)}}\right)^{\frac{\alpha_m}{\pi} - 1} d\zeta,
\end{equation}
and 
\begin{equation}\label{eq:f'}
f'(w) = C_{SC} \prod_{m=1}^K \left(1 - \frac{w}{w_m^{(c)}}\right)^{\frac{\alpha_m}{\pi} - 1},
\end{equation} 
where $w \in D$, $f(w) \in \Omega$, the complex constant $C_{SC}$ determines the scale and rotation of the polygon,  $\alpha_m$ are the interior angles of the polygon vertices and $w_m^{(c)}$ are the pre-vertices on the unit circle $\partial D$.

We define $\tilde{R}_0$ such that
\begin{equation}\label{eq:R0tildeR0}
\tilde{R}_0(w) = \tilde{R}_0\bigl(f^{-1}(z)\bigr)=R_0(z), \quad z\in \Omega.
\end{equation}
Since $f$ is a conformal map from the unit disk to the polygon, $\tilde{R}_0$ is a solution of the Laplace equation on the disk. Substituting the relation between $R_0$ and $\tilde{R}_0$ in \eqref{eq:R0tildeR0}, into \eqref{eq:108}-\eqref{eq:108-1}, we have
\begin{align}
&\Delta \tilde{R}_0 = 0,\ \text{on}\ D\\
&\frac{\partial \tilde{R}_0}{\partial \mu} =  \sum_{j = 1}^K \frac{2\pi}{\alpha_j} c_j \frac{\partial \ln|f(w)-b_j|}{\partial \mu} - \frac{\partial \sum_{i = 1}^N d_i \ln|f(w)-a_i|}{\partial \mu}\ \text{on}\ \partial D\backslash\{w_1^{(c)},\ldots,w_K^{(c)}\}.
\end{align}
where $\mu$ is the outward normal direction on the boundary of the disk, $a_i = f(w_i)$, $i = 1,\ldots,N$ $b_j = f(w_j^{(c)})$, $j = 1,\ldots, K$. According to Proposition \ref{prop:tildeR0} in Appendix B, the solution of the boundary-value problem for $\tilde{R}_0$ is given by 
\begin{equation}\label{eq:tildeR0new}
\tilde{R}_0(w) = \sum_{j=1}^K \frac{2\pi}{\alpha_j} c_j \ln \left| \frac{f(w) - b_j}{(w - w_j^{(c)})^{\alpha_j/\pi}} \right| + \sum_{i=1}^N d_i \ln \left| \frac{(1 - w \bar{w}_i)(w - w_i)}{f(w) - a_i} \right|, \quad w\in D.
\end{equation}

To evaluate $R_0(a_k) = \tilde{R}_0(w_k)$, we take the limit as $w \to w_k$ and subsequently the singular term in the second sum in \eqref{eq:tildeR0new} reduces to
\[
\lim_{w\to w_k}d_k \ln \left| \frac{(1 - w \bar{w}_k)(w - w_k)}{f(w) - a_k} \right| = d_k \ln \left| \frac{1 - |w_k|^2}{f'(w_k)} \right|.
\]
Substituting \eqref{eq:f'} into the above equation, we have
\begin{align}
&R_0(a_k)=\tilde{R}_0(w_k) = d_k \ln \left| \frac{1 - |w_k|^2}{f'(w_k)} \right| + \sum_{\substack{i=1 \\ i \neq k}}^N d_i \ln \left| \frac{(1 - w_k \bar{w}_i)(w_k - w_i)}{a_k - a_i} \right| +  \sum_{j=1}^K \frac{2\pi}{\alpha_j}c_j \ln \left| \frac{f(w_k) - b_j}{(w_k - w_j^{(c)})^{\alpha_j/\pi}} \right|,\\
&= d_k \ln \left| \frac{1 - |w_k|^2}{C_{SC} \prod_{m = 1}^K(1 - w_k/w_m^{(c)})^{\frac{\alpha_m}{\pi}-1}} \right| + \sum_{\substack{i=1 \\ i \neq k}}^N d_i \ln \left| \frac{(1 - w_k \bar{w}_i)(w_k - w_i)}{a_k - a_i} \right| +  \sum_{j=1}^K \frac{2\pi}{\alpha_j}c_j \ln \left| \frac{a_k - b_j}{(w_k - w_j^{(c)})^{\alpha_j/\pi}} \right|,\\
&= d_k \ln \left| \frac{1 - |w_k|^2}{C_{SC} \prod_{m = 1}^K(w_m^{(c)} - w_k)^{\frac{\alpha_m}{\pi}-1}} \right| + \sum_{\substack{i=1 \\ i \neq k}}^N d_i \ln \left| \frac{(1 - w_k \bar{w}_i)(w_k - w_i)}{a_k - a_i} \right| +  \sum_{j=1}^K \frac{2\pi}{\alpha_j}c_j \ln \left| \frac{a_k - b_j}{(w_k - w_j^{(c)})^{\alpha_j/\pi}} \right|.
\end{align}
Take the $w \to w_m^{(c)}$ limit to evaluate $R_0(b_m) = \tilde{R}_0(w_m^{(c)})$. Since $|w_m^{(c)}| = 1$ we have $|1 - w_m^{(c)} \bar{w}_i| = |w_m^{(c)} - w_i|$. To deal with the first term in 
\eqref{eq:tildeR0new}, we define the non-zero asymptotic mapping constant $\mathcal{A}_m$,
\begin{align*}
\mathcal{A}_m &= \lim_{w \to w_m^{(c)}} \frac{f(w) - b_m}{(w - w_m^{(c)})^{\alpha_m/\pi}}= \lim_{w \to w_m^{(c)}} \frac{\frac{d}{dw} \left( C_{SC} \int_{w_m^{(c)}}^{w} (1 - \frac{\zeta}{w_m^{(c)}})^{\frac{\alpha_m}{\pi}-1} d\zeta \right)}{\frac{d}{dw}(w - w_m^{(c)})^{\alpha_m/\pi}}\\
&= \frac{ C_{SC}\pi}{\alpha_m} \lim_{w \to w_m^{(c)}} \left( \frac{1 - \frac{w}{w_m^{(c)}}}{w-w_m^{(c)}} \right)^{\frac{\alpha_m}{\pi}-1}= \frac{ C_{SC}\pi}{\alpha_m} ( - w_m^{(c)})^{1-\frac{\alpha_m}{\pi}}.\end{align*}

The regularized field at the corner, $f(w_m^{(c)}) = b_m$, is given by:
\begin{align}
R_0(b_m)=\tilde{R}_0(w_m^{(c)}) = \frac{2\pi}{\alpha_m} c_m \ln |\mathcal{A}_m| +  \sum_{\substack{j=1 \\ j \neq m}}^K \frac{2\pi}{\alpha_j}c_j \ln \left| \frac{f(w_m^{(c)}) - b_j}{(w_m^{(c)} - w_j)^{\alpha_j/\pi}} \right| + \sum_{i=1}^N d_i \ln \left| \frac{(w_m^{(c)} - w_i)^2}{f(w_m^{(c)}) - a_i} \right|.
\end{align}

Substituting $R_0(a_k)$ and $R_0(b_m)$ into the renormalized energy $W$ in \eqref{eq:W_b}, the logarithmic map terms $\ln|a_k - a_i|$, $\ln|a_k - b_j|$, $\ln|b_m - b_j|$ perfectly cancel, so that $W$ only depends on the pre-image locations $\mathbf{w}$ of the interior defects and the fixed pre-images $\mathbf{w}^{(c)}$ of the polygon corners.
\end{proof}
The role of each term in the renormalized energy in Theorem~\ref{thm:main} can be explained as follows.
\eqref{eq:l1} is the standard defect-defect interaction energy. It includes both the direct distance $|w_i - w_j|$ and the interaction of defects with the mirror images of the remaining defects, $|1 - w_i \bar{w}_j|$. \eqref{eq:l2} is the standard self-energy of interior defects in a disk. \eqref{eq:l3} is the geometric correction arising from the Schwarz-Christoffel mapping derivative $f'(w)$, modelling the strong repulsion/attraction induced individually by each polygon corner $m$ proportional to its inner angle $\alpha_m$. \eqref{eq:l4} models the interaction between an interior defect and a polygon corner or vertex. Remarkably, the $4\pi$ coefficient is a topological invariant independent of the polygon's shape. \eqref{eq:l5} models the interaction energy between vertices or corner defects. The self-energy of corner defects in \eqref{eq:l7} is evaluated explicitly for each corner's distinct local geometry.

\section{Numerical Results}\label{sec:numericspolygons}

There is substantial literature on critical points of the rLdG energy \eqref{eq:p_energy} on polygons, subject to Dirichlet tangential boundary conditions \cite{han2021solution, shi2022nematic, yin2020construction}, primarily using qualitative analyses of the Euler-Lagrange equations and finite-element or finite difference methods. In what follows, we compute critical points (optimized interior defect locations) of the renormalized energy in Theorem~\ref{thm:main} for different combinations of interior and corner defect degrees and use the critical points of the renormalized energy to locate corresponding critical points of the rLdG energy (when they exist) for canonical examples. The computation follows a framework analogous to that of the disk domain, but with essential modifications to accommodate defects at the corners, the different formulation of the renormalized energy and the combinatorial selection of both interior defect winding number contribution, $\mathbf{d}$, and corner defect winding number contribution, $\mathbf{c}$. We take the hexagon as an example of a regular domain and take the rectangle  as an example of an irregular geometry, and compare the numerical results with our previous studies. In fact, by using the Schwarz-Christoffel mapping $f$, we are able to deal with any polygons, regular or irregular, convex or concave.

\subsection{Hexagon Domain}\label{sec:hexagon}
Consider a regular hexagon with tangential boundary conditions for the nematic director. Recall that the rLdG order parameter in two dimensions is given by
\[
\P = s \left(2 \mathbf{n}\otimes \mathbf{n} - \mathbf{I}_2 \right)
\] and the tangential boundary conditions require $\mathbf{n}$ (the nematic director) to be tangential to the polygon edges. Equivalently, if $\mathbf{n}=\left(\cos\gamma, \sin\gamma \right)$, then $\gamma$ is constant on each edge of the hexagon. 

We now provide a systematic paradigm for computing critical points of the renormalized energy in Proposition \ref{prop:tangentialrenormalizedenergy}. For a regular hexagon, we have that $K=6$, the vertices are located at  $b_j = e^{\frac{i2\pi j}{K}}$ with inner angle $\alpha_j = \frac{(K-2)\pi}{K}$, $j = 1,\ldots,K$. There are two types of vertices dictated by the tangential boundary conditions - \emph{splay} vertex for which $\mathbf{n}$ is radial around the vertex or $\mathbf{n}$ splays near a splay vertex and \emph{bend} vertex for which $\mathbf{n}$ bends around a vertex. The winding number contribution of a bend vertex is $c_j = 1/3$ whereas the winding number contribution of a splay vertex is $c_j = -2/3$. The corresponding degrees of the nematic director are $-c_j^P \in \left\{-1/6, 1/3 \right\}$, since the topological degree of the nematic director is half of the degree of $u_*$ in \eqref{eq:main-canonical}. Thus, the critical points of $W$ have the simplest structure when there are two splay vertices, four bend vertices and there are no interior defects since
\[
\sum_{i=1}^N d_i = \sum_{j=1}^{K} c_j.
\] These critical points of $W$ correspond to the interior defect-free critical points of the GL energy or the rLdG energy \eqref{eq:p_energy}, labelled as the \emph{Para}, \emph{Meta} and \emph{Ortho} critical points or rLdG solutions in \cite{han2021solution}, each of which has two splay, i.e., $n_s = 2$ and four bend vertices, i.e., $n_b = 4$. There are three rotationally equivalent \emph{Para} solutions, each with a pair of diametrically opposite splay vertices. The \emph{Para} solutions are the rLdG energy minimizers for sufficiently small $\eps$ or large $\lambda$, according to numerical computations in \cite{han2021solution}. The \emph{Meta} solutions have two splay vertices separated by a bend vertex and the \emph{Ortho} solutions have two adjacent splay vertices. Unsurprisingly, whilst the \emph{Para}, \emph{Meta}, \emph{Ortho} solutions can be index-$0$ critical points of the rLdG energy, the \emph{Ortho} solutions have the highest energy in this class of solutions with no interior defects. In what follows, we use the renormalized energy in Theorem~\ref{thm:main} to recover existing results in the literature, i.e., recover the reported critical points of the rLdG energy on a hexagon with tangential boundary conditions in \cite{han2021solution} and to discover new critical points of the rLdG energy, that might be missed by conventional numerical methods.

To systematically explore all physically permissible configurations on a regular hexagon with tangential boundary conditions, we must determine both the vertex/corner configuration array $\mathbf{c} = (c_1, c_2, \dots, c_K)$ and the interior defect array $\mathbf{d} = (d_1, d_2, \dots, d_N)$. The selection strategy proceeds as follows:
\begin{enumerate}
    \item \textbf{Corner Degrees ($c_j$):}
    We constrain the vertices to be either bend or splay vertices, so that $c_j \in \left\{1/3, -2/3 \right\}$. We allow for any geometrically allowed number of splay and bend vertices, $n_s$ and $n_b$ such that $n_s + n_b = 6$. 
    For a hexagon ($K=6$), this yields $2^6 = 64$ possible boundary combinations.
    \item \textbf{Cyclic Symmetry Reduction:} To avoid redundant computations of rotationally equivalent states, we compute the cyclic permutations of all $64$ combinations and extract only the $14$ unique cyclic equivalence classes.
    \item \textbf{Internal Degrees ($d_i$) and Topological Constraint:} For each corner configuration, we compute its total boundary degree $d = \sum_{j=1}^K c_j$. We then iterate through combinations of interior defect degrees $d_i \in \{1, 2, -1, -2\}$ with replacement. A combination is deemed valid if and only if it satisfies the topological constraint in \eqref{eq:topological constraint},
    $\sum_{i=1}^N d_i = \sum_{j=1}^K c_j$. 
\end{enumerate}

Applying the aforementioned strategy to a hexagon ($K=6$) with up to $N_{max}=4$ interior defects, the algorithm first reduces the $2^6 = 64$ possible boundary sequences to 14 unique cyclic configurations. For each of these 14 combinations of the vertex degrees, the algorithm searches for corresponding interior defect arrays $\mathbf{d}$ of length $1\le N \le N_{max}$ that satisfy the topological sum constraint above, with $|d_i|\leq 2$. This rigorous combinatorial matching yields exactly $94$ independent $(\mathbf{d},\mathbf{c})$ topological task configurations. 
In one case, we specifically set $N=7$, 
$\mathbf{c}=\left(-2/3,-2/3,-2/3,-2/3,-2/3,-2/3\right)$ (six splay vertices), and 
$\mathbf{d}=\left(2,-1,-1,-1,-1,-1,-1\right)$ to find the optimized defect locations of the seven interior defects, and the corresponding critical point of $W$ yields the high-index H critical point of the rLdG energy. The H-critical point of the rLdG energy has a central point defect of degree $+1$, surrounded by six symmetric $-1/2$ point defects in the nematic director.

Next, we compute the critical points of the renormalized energy for each admissible combination, $\left(\mathbf{d}, \mathbf{c} \right)$ (with different Morse indices) and use the resulting optimized defect locations, $\left(\mathbf{a}\right)$, to define the canonical harmonic map $u_*$ in \eqref{eq:main-canonical}. The canonical map $u_*$ is used to define high-quality initial conditions for the rLdG solver. In Fig. \ref{fig:old}, we use this strategy to recover the Ring, BD, M1, T, T10, T130, T135, TD, and H critical points of the rLdG energy, as reported in \cite{han2021solution}. We do not discuss Para, Meta and Ortho, for which the number of interior defects is zero, i.e., the renormalized energy is independent of the locations of defects $\mathbf{a}$. 

\begin{figure}[ht]
    \centering
\includegraphics[width=0.8\textwidth]{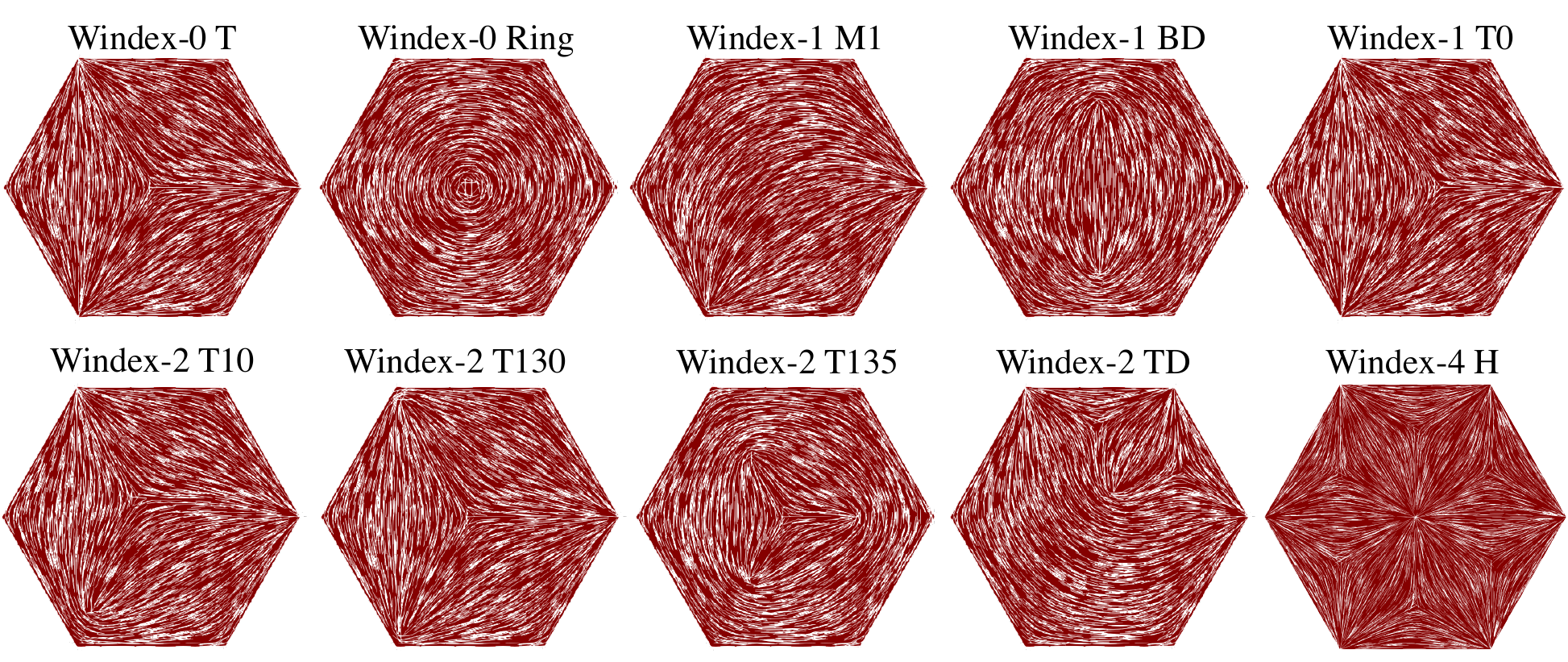}
    \caption{The profiles of the limiting harmonic maps, $\P_*$, constructed from the canonical harmonic map, $u_*$ in \eqref{eq:main-canonical}, for admissible $\left(\mathbf{d}, \mathbf{c} \right)$. These limiting solutions have corresponding rLdG solutions as reported in \cite{han2021solution}.}
    \label{fig:old}
\end{figure}

\begin{table}[htbp]
\centering
\caption{Comparison of the Morse indices of critical points for the renormalized energy and the corresponding rLdG solutions at $\lambda^2 = 600$, as reported in \cite{han2021solution}.}
\label{tab:index_comparison}
\resizebox{\textwidth}{!}{ 
\begin{tabular}{lcccccccccc}
\hline
\textbf{Configuration} & T & Ring & M1 & BD & T0 & T10 & T130 & T135 & TD & H \\ \hline
Windex    & 0 & 0    & 1  & 1  & 1  & 2   & 2    & 2    & 2  & 4 \\
index ($\lambda^2=600$) & 0 & 2    & 1  & 2  & 1  & 2   & 3    & 3    & 3  & 8 \\ \hline
\end{tabular}    
} 
\end{table}

As observed from Table \ref{tab:index_comparison}, the Morse indices of the rLdG solutions are always larger than the corresponding critical points of the renormalized energy. An intuitive explanation is that the renormalized energy can be regarded as an asymptotic limit of the rLdG model as $\lambda^2 \to \infty$, under the constraints of a fixed number of interior defects $N$, specific winding numbers $\mathbf{d}$, and prescribed boundary rotations or vertex degrees on a polygon $\mathbf{c}$. Consequently, the renormalized energy focuses on a lower-dimensional manifold of configurations, effectively reducing the available degrees of freedom for perturbations and leading to fewer unstable directions. However, it should be noted that the Morse indices of rLdG solutions depend on the parameter $\lambda^2$. In the regime of smaller $\lambda^2$, when the core structure of defects becomes more prominent and the asymptotic approximation weakens, this comparative conclusion may no longer hold.

We use the limiting solutions (constructed from the canonical harmonic map in \eqref{eq:main-canonical} and the critical points of the renormalized energy) as initial conditions to solve the Euler-Lagrange equation in \eqref{eq:rLdG_EL} by finite element method with Newton solver, and find six novel critical points of the rLdG energy in Fig. \ref{fig:new}. In Fig. \ref{fig:new}(a), the two $+1/2$ defects are displaced relative to the locations of two defects in the BD solution in Fig.~\ref{fig:old}, in the direction normal to the line connecting them. In Fig.~\ref{fig:new}(c), the three $+1/2$ defects are located closer to the corresponding corners than in the T135 solution, as plotted in Fig.~\ref{fig:old}.
\begin{figure}[ht]
    \centering
\includegraphics[width=\textwidth]{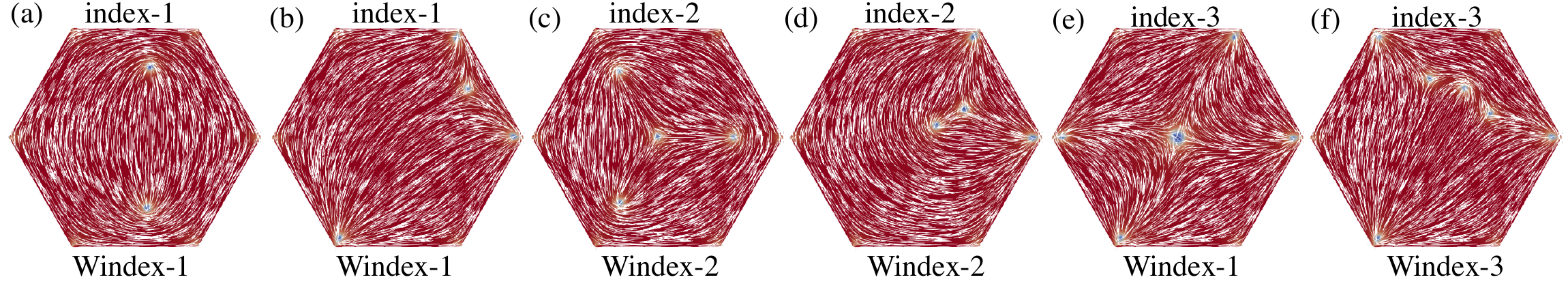}
    \caption{The profiles of novel rLdG solutions with $\lambda^2=2000$. The point defects are small blue dots with $\P \approx 0$. In (b), the rLdG solution has three splay vertices and an interior $-1/2$ defect. In (d), the rLdG solution has two adjacent splay vertices, four bend vertices and an interior $\pm 1/2$ defect pair. In (e), there are 4 splay vertices and an interior $-1$-degree defect. In (f), there are two interior $-1/2$ defects and an interior $+1/2$ defect, accompanied by three splay and three bend vertices.}
    \label{fig:new}
\end{figure}

\subsection{Rectangular Domain}\label{sec:rectangle}  
Consider a rectangle, $R=\left\{(x,y)\in \mathbb{R}^2| -L_x \leq x \leq L_x, -L_y \leq y \leq L_y \right\}$, with four vertices at $b_1 = L_x - iL_y$, $b_2 = L_x + iL_y$, $b_3 = -L_x + iL_y$, $b_4 = -L_x - iL_y$, the aspect ratio $b_{ratio} = \frac{L_y}{L_x}$ and the interior angles are $\alpha_j = \pi/2$ for $j = 1, 2, 3, 4$. As before, we work with tangential boundary conditions and the tangential boundary conditions require that the vertices are either \emph{splay} or \emph{bend} vertices. The effective topological degree at a splay vertex is $c_j = -1/2$, and $c_j = 1/2$ for a bend vertex. We restrict ourselves to $c_j \in \left\{ -1/2, 1/2 \right\}$, in what follows.

As before, we have the topological constraint
\[
\sum_{i=1}^{N}d_i = \sum_{j=1}^{4} c_j.
\] The simplest topology with no interior defects, i.e., $d_i = 0$, $i = 1,\ldots,N$ is attained when we have two splay vertices and two bend vertices. For sufficiently large $\lambda$ or sufficient small $\eps$ in \eqref{eq:GL_energy}, there are two known index-$0$ critical points of the rLdG energy - the \emph{diagonal} solution with two diagonally opposite splay vertices and the \emph{rotated} solution with two adjacent splay vertices. In what follows, we use the critical points of the renormalized energy for admissible $\left(\mathbf{d}, \mathbf{c} \right)$, subject to the topological constraint, and then use these critical points of $W$ to compute corresponding critical points of the rLdG energy on a rectangle with tangential boundary conditions.

Without any constraint on the number of splay or bend vertices, to optimize the computational search and eliminate physically redundant configurations, we account for the intrinsic $D_2$ symmetry of the rectangle. Specifically, the $\mathbf{c}$ arrays are treated as isomorphic and subsequently deduplicated if they can be mapped onto one another via a $180^\circ$ rotation or reflections across the horizontal and vertical axes. This symmetry reduction yields exactly $7$ unique, non-isomorphic configurations for the vertex charges. For the interior topological defects, denoted by $\mathbf{d}$, we restrict the individual charges to $d_i \in \{1, 2, -1, -2\}$ and limit the maximum number of defects to $N \leq 4$. By coupling these interior defect configurations with the $7$ unique vertex configurations or choices of $\mathbf{c}$—subject to the global topological constraint in \eqref{eq:topological constraint}—we generate a total of 49 distinct $(\mathbf{d}, \mathbf{c})$ combinations to serve as the initial parameter configurations for the numerical searches of critical points of the rLdG energy.

We compare our results with the critical points of the rLdG energy on a rectangular domain with tangential boundary conditions, as reported in \cite{shi2022nematic}. We recover all the rLdG solutions reported in \cite{shi2022nematic}, with point defects, for $\lambda^2 = 50$ as shown in Fig. \ref{fig:rectangle_old} and miss rLdG solutions with line defects. The renormalized energy cannot capture candidate critical points with line defects. For $\lambda^2 = 1000$, we find three new index-2 rLdG solutions in Fig. \ref{fig:rectangle_new}.
 The first solution (Fig.~\ref{fig:rectangle_new}(a)) features a $-1$ defect at its center, accompanied by two $+1/2$ defects approximately situated along the diagonal. The interior defects have a total of zero topological charge, with two splay and two bend vertices. The second solution (Fig.~\ref{fig:rectangle_new}(b)) exhibits a single $-1$ defect located at the center, accompanied by four splay vertices. The third rLdG solution (Fig.~\ref{fig:rectangle_new}(c)) is characterized by a central $+1$ defect, accompanied by four bend vertices. It is interesting that the index of the corresponding critical point of the renormalized energy is Windex-$0$ for the second and third solutions, i.e., the localized central point defect corresponds to the minimum of the renormalized energy when $c_j$ is the same for all four vertices and $N=1$. These minima of $W$ are used to define a canonical harmonic map in \eqref{eq:main-canonical}, with an associated limiting $\P_*$ which converge to the rLdG critical points in Fig.~\ref{fig:rectangle_new}. The rLdG critical points have a higher Morse index owing to greater freedom in the potential directions of instability.

\begin{figure}[ht]
    \centering
\includegraphics[width=\textwidth]{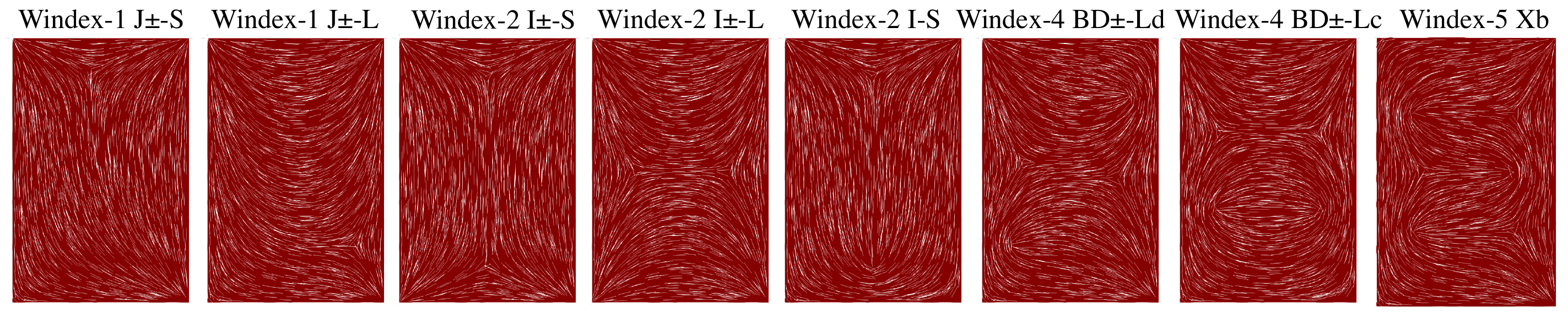}
    \caption{The profiles of limiting solutions, $\P_*$, corresponding to the rLdG solutions reported in \cite{shi2022nematic}.}
    \label{fig:rectangle_old}
\end{figure}

\begin{table}[htbp]
\centering
\caption{Comparison of the Morse indices of critical points for the renormalized energy and the corresponding rLdG solutions at $\lambda^2 = 50$, as reported in \cite{shi2022nematic}.}
\label{tab:index_comparison_rectangle}
\resizebox{\textwidth}{!}{ 
\begin{tabular}{lcccccccc}
\hline
\textbf{Configuration} & J$\pm$-S & J$\pm$-L & I$\pm$-S & I$\pm$-L & I-S & BD$\pm$-Ld & BD$\pm$-Lc & Xb  \\ \hline
Windex    & 1 & 1    & 2  & 2  & 2  & 4   & 4    & 5   \\
index ($\lambda^2=50$) & 1 & 1    & 2  & 2  & 2  & 4   & 4    & 6   \\ \hline
\end{tabular}    
} 
\end{table}

\begin{figure}[ht]
    \centering
\includegraphics[width=0.4\textwidth]{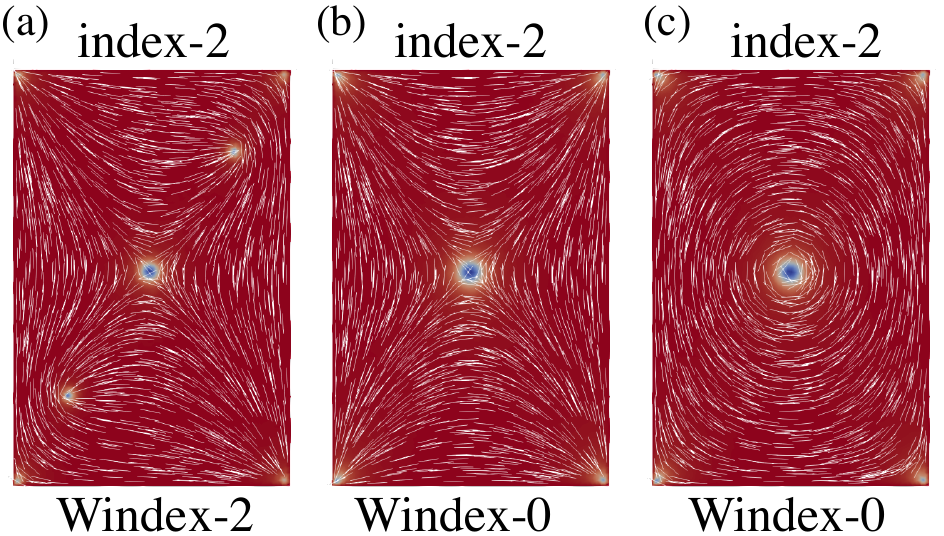}
    \caption{The profiles of novel rLdG solutions with $\lambda^2=1000$.}
    \label{fig:rectangle_new}
\end{figure}

In Table \ref{tab:index_comparison} and Table \ref{tab:index_comparison_rectangle}, we find that when comparing the Windex of a limiting solution with the index of its corresponding rLdG solution in a domain of finite size, the rLdG index is always greater than or equal to the Windex. 
Additional negative directions
may come from splitting instability of interior defects and
unstable modes inside the vertex sectors. In addition, practically, finite core-size effects may change the Morse index at small but finite $\eps$ or modest values of $\lambda$. When Windex =index, we can use the explicitly computable instabilities of the critical points of the renormalized energy to identify the unstable eigendirections of the corresponding rLdG solutions. This could provide novel control strategies or penetrating insight into how to stabilize high-index saddle points of the rLdG energy on bounded domains, although this should be investigated more carefully.

\subsection{Is the Renormalized Energy Approach as good as conventional PDE solvers?}
\label{sec:computationalefficiency}
We take the hexagonal domain as an example to compare the computational costs of constructing the rLdG solution landscape using the renormalized energy (RE) approach versus the direct high-index optimization-based saddle dynamics (HiOSD)\cite{yin2019high} search in the rLdG model (based on finite-element/finite-difference numerical computations of solutions of the Euler-Lagrange equations in \eqref{eq:rLdG_EL}).

\paragraph{Comparison of Methodology} 
In \cite{han2021solution}, the authors use the \emph{direct approach} based on the HiOSD algorithm to explore the infinite-dimensional energy landscape of the rLdG functional in \eqref{eq:p_energy}, as a function of the re-scaled edge length $\lambda$. In contrast, the \emph{RE framework} decomposes this task as follows:\begin{enumerate}\item \textbf{RE Search:} For each 
admissible set of $(\mathbf{d}, \mathbf{c})$ configurations of degrees of interior defects and vertices, we define the corresponding renormalized energy $W$. HiOSD is then applied to $W$ to compute the critical points (representing defect positions) of $W$ in a significantly reduced search space of dimension at most $2N$, along with their Morse indices.\item \textbf{rLdG Refinement:} Based on the identified defect positions, a reference state $u_*$ is constructed from \eqref{eq:main-canonical} and used as an initial guess for a Newton-type solver of the PDEs in \eqref{eq:rLdG_EL}, yielding the corresponding full rLdG solution (when it exists).\end{enumerate}

\textbf{Benchmark (Direct Approach or Direct rLdG Search)}:
As reported in \cite{han2021solution}, the construction of the rLdG solution landscape on a hexagon domain with tangential boundary conditions, at $\lambda^2 = 150$ with a regular triangle mesh $h = 1/50$, requires a total CPU time of $46$ minutes using HiOSD on a Lenovo T450s laptop (Matlab 2018a). The average time to find a single new connection between critical points is approximately $\mathbf{6.5}$ minutes, which is almost the time required to find a new solution. The efficiency of this high-dimensional search is inherently limited by the Morse index of the target solutions and the conditioning of the Hessian.

\textbf{Efficiency of the RE-based Approach}: Our computations are conducted on a MacBook Pro (2.3 GHz 8-core Intel Core i9, 16 logical cores) using Python 3 for a system with $K=6$ and $N_{max}=4$. We firstly perform a parallelized scan across 94 sets of mixed $(\mathbf{d}, \mathbf{c})$ configurations, involving 14 unique cyclic isomorphic classes. We apply HiOSD to the $2N$-dimensional RE; the total wall-clock time to locate all critical points of $W$ is only $4.8$ minutes. We use the $38$ critical points of $W$ (computed at the RE stage) to construct high-quality initial conditions for the rLdG solver as outlined above; the Newton-based refinement yields $15$ full rLdG solutions in just $1.57$ minutes. The average time for finding a new solution is $\mathbf{0.42}$ minutes.

\begin{table}[htbp]
\centering
\caption{Computational Efficiency Comparison}
\begin{tabular}{lccc}
\hline
Method & Number of rLdG Solutions & Total Time & Average Time \\
\hline
\textbf{RE-based approach} & 15 & $4.8 + 1.57$ min & \textbf{0.42} min \\
Direct rLdG Search & 7 & 46 min & 6.5 min \\
\hline
\end{tabular}
\end{table}

The primary advantage of the RE-based method lies in its drastic dimensionality reduction, which facilitates a transition from searching in an infinite-dimensional discretized functional space to a low-dimensional manifold of defect positions (dictated by the critical points of the renormalized energy). While HiOSD remains the core engine for finding saddle points, its application to this $2N$-dimensional manifold is exponentially faster than in the original space for several fundamental reasons. Firstly, this strategy effectively mitigates the curse of dimensionality; by collapsing the search volume from a high-dimensional grid (typically $10^4 \sim 10^6$ degrees of freedom) to a space of dimension $2N$, the probability of locating isolated solution branches increases by orders of magnitude. Secondly, the computational overhead per iteration is drastically reduced, as the high-index saddle point search is performed on a small $2N \times 2N$ Hessian matrix, bypassing the expensive spectral estimations and poor conditioning inherent in discretized PDEs. Thirdly, this approach effectively bypasses the sensitivity of HiOSD to initial distances from target states or sensitivity to initial conditions. By performing a comprehensive global scan of the smooth and well-behaved renormalized energy landscape, we can identify near-perfect initial guesses ($u_*$ or $\P_*$) for the full rLdG model. This "global-to-local" pipeline ensures, to a certain extent, that the final stage of the computation—the rLdG refinement—takes place within the basin of quadratic convergence of a Newton-type solver. Moreover, the large number of initial guesses can be further parallelized for computation. Consequently, the RE-based framework provides a robust and high-speed gateway to construct complex solution landscapes that would otherwise be computationally prohibitive.  

Having said that, one should be aware of the limitations of the RE method. Firstly, the RE method yields excellent initial conditions for the rLdG model in the $\lambda \to \infty$ or the $\eps \to 0$ limit, and not for modest domain sizes. Secondly, the RE method is based on the underpinning assumption that the canonical harmonic map has isolated interior point defects and consequently, cannot capture line defects and cannot act as a search engine for rLdG critical points with high-dimensional defects (such as the WORS or BD solutions discussed in \cite{canevari2017order};
the WORS has two defect or nodal lines (with $\P \approx 0$) along the two square diagonals and the BD solution has two defect lines localised near the shortest rectangle edges). Thirdly, the RE approach cannot be readily extended to three-dimensional scenarios whereas the HiOSD has been successfully applied to study solution landscapes in the LdG framework, in cuboids and prisms 
\cite{shi2023hierarchies,shi2024multistability,han2023reduced}.

\section{Conclusion}
\label{sec:conclusion}
The rLdG energy in \eqref{eq:p_energy} is a canonical free energy in variational theories for materials science, wherein the material phase is modelled by an order parameter, $u:\mathbb{R}^2 \to \mathbb{R}^m$ with $m\geq 2$ and the energy density comprises a convex elastic energy density and a multi-well bulk potential. The bulk potential attains its minima at $|u| \in \left\{U_1, \ldots, U_N \right\}$. Such free energies appear in the context of liquid crystals, theory of superconductivity, micromagnetics, phase-field models etc. These energies are nonlinear and non-convex and typically admit multiple minima and saddle points; the multiplicity and properties of the critical points depend sensitively on the geometry, curvature, boundary conditions and model parameters. For example, it is known that the rLdG energy has a unique critical point/minimizer on small domains with Dirichlet boundary conditions and multiple minima appear as the domain size increases. It is an open question as to how the multiple solution branches disappear or annihilate as the domain size shrinks, or what is the role of defects in multistability and the relationship between the Morse index of a critical point and the nature of its defect set. These questions are of fundamental mathematical interest, i.e., how do we control or understand the solution set of systems of nonlinear and coupled partial differential equations (in this case, the Euler-Lagrange equations associated with a free energy) and of far-reaching practical importance since these questions hold clues about controllability and tunability of material systems.

The renormalized energy is a beautiful finite-dimensional mathematical concept that has direct relevance to the study of infinite-dimensional variational problems, in this case the rLdG model for 2D liquid crystal systems. The renormalized energy effectively yields candidate defect locations for a prescribed set of topological degrees of the interior and boundary defects/vortices. These defect locations yield canonical harmonic maps (which need not be solutions of the nonlinear PDEs in \eqref{eq:rLdG_EL}) but the canonical harmonic maps tend to be excellent guesses for critical points of the rLdG energy on very large domains, subject to arbitrary Dirichlet data. We generalize the concept of the renormalized energy to polygons and there are new contributions stemming from the sharp corners and the interaction between the interior defects and the corner defects. Our main contribution is that we illustrate that the renormalized energy serves as an excellent tool for locating the critical points of the rLdG energy on typical polygons, such as hexagons and rectangles. In fact, we recover known results on rLdG solution landscapes on hexagons and rectangles by using the renormalized energy to generate initial conditions for the rLdG solver, but in a significantly more computationally efficient manner. Surprisingly, we also locate new solution branches not unravelled by previous direct approaches, since the renormalized energy approach is better equipped to locate disconnected solution branches. Our work can give valuable insight into how interior point defects/vortices (where the order parameter vanishes) and vertex defect profiles (the $\mathbf{c}$) can be tuned to generate and perhaps, even stabilize solutions with tailor-made morphologies on both regular (the hexagon example in Section~\ref{sec:hexagon}) and irregular polygons (the rectangle example in Section~\ref{sec:rectangle}). This is particularly interesting since recent work shows that both polygon size and polygon asymmetry play a crucial role in multistability, in the context of the rLdG model \cite{rajamanickam2026nematic}.

Moving forward, we see several ways in which the renormalized energy could complement and supplement the study of rLdG solution landscapes - the study of critical points of the rLdG energy, their Morse indices and pathways between different critical points of the rLdG energy. For example, we have used the renormalized energy to locate several solution branches of \eqref{eq:rLdG_EL} for large $\lambda$. These solution branches are typically distinguished by arrays of interior defects. These solution branches will typically disappear as we decrease $\lambda$. The renormalized energy can offer explicit insight into the instabilities of these solution branches, i.e., are the instabilities related to the merger of defects so that higher-index saddle points connect to lower-index saddle points along pathways such that interior defects merge with each other and/or interior defects get pinned to polygon vertices along these pathways. Hence, the renormalized energy can offer insights into the passage from multistability to uniqueness in the solution landscapes of variational problems, as a function of the geometry. Further, the renormalized energy is inherently restricted to two-dimensional domains. However, we can use the renormalized energy to construct z-invariant critical points of the rLdG energy on three-dimensional domains, such as cuboids and cylinders. Again, the renormalized approach can be used to study the stability properties of such z-invariant solution branches, and in particular, the bifurcation from z-invariant to z-dependent solutions in 3D scenarios. Of course, it remains an open question as to why certain configurations - labelled by ($\mathbf{d}$, $\mathbf{c}$) - correspond to critical points of the rLdG energy and why other configurations do not, and a concrete answer to this question would give algorithmic recipes for constructing critical points of the rLdG energy with interior point defects. We hope to address some of these questions in future work.

\section{Acknowledgements} YH is supported by Young Elite Scientists Sponsorship Program by Beijing Association for Science and Technology and the Young Faculty Development Fund of the School of Mathematics at Renmin University of China. YH and AM acknowledge support from the Leverhulme Research Project Grant RPG-2021-401. 
YH thanks to Mr. Yuyang Liu from Peking University for his guidance on using the Saddlelandscape package. 



\textbf{Appendix A}

\begin{proposition}\label{prop:71}For $z \in \partial D$ on the boundary of a unit disk, and $a_j \in D$ in the unit disk, we have $$\frac{\partial}{\partial \nu} \left( \ln|1 - z \bar{a}_j| + \ln|z - a_j| \right) = 1.$$
\end{proposition}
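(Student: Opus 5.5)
On the unit circle the outward normal derivative is the radial derivative, so $\partial/\partial\nu=\partial/\partial r$ at $r=1$. Write $z=re^{i\theta}$ and fix $a=a_j$ with $|a|<1$. The plan is to compute the radial derivative of each logarithm separately for $r$ near $1$, then set $r=1$ and add. The key tool is the identity $\partial_r \ln|F(z)| = \operatorname{Re}\!\left(\frac{F'(z)}{F(z)}\,\frac{z}{r}\right)$. It holds for holomorphic nonvanishing $F$, because $\partial_r z = z/r$ and $\ln|F|=\operatorname{Re}\log F$.

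\textbf{Step 1.} Take $F(z)=z-a$. At $r=1$ this gives
\[
\partial_\nu \ln|z-a| = \operatorname{Re}\frac{z}{z-a}.
\]

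\textbf{Step 2.} Take $F(z)=1-z\bar a$. This is holomorphic and nonvanishing on a neighbourhood of $\overline{D}$, since $|z\bar a|<1$ there. At $r=1$ this gives
\[
\partial_\nu \ln|1-z\bar a| = \operatorname{Re}\frac{-z\bar a}{1-z\bar a}.
\]
Because $|z|=1$, we have $\bar z = 1/z$, so $1-z\bar a = z(\bar z-\bar a) = z\,\overline{(z-a)}$. Hence
\[
\frac{-z\bar a}{1-z\bar a}=\frac{-\bar a}{\overline{z-a}},
\]
and its real part equals $\operatorname{Re}\frac{-a}{z-a}$, since conjugation does not change the real part.

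\textbf{Step 3.} Adding Steps 1 and 2 gives
\[
\operatorname{Re}\frac{z-a}{z-a}=1,
\]
which is the claim.

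The only point needing care is Step 2. The substitution $\bar z=1/z$ may be used only after the radial derivative has been taken, since it holds only on $|z|=1$. So the derivative must be computed for $r$ near $1$ first and restricted to $|z|=1$ afterwards. Also, $z\neq a$ on $\partial D$, so both logarithms are smooth near the boundary.

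As a cross-check, write $\ln|z-a|+\ln|1-z\bar a| = \tfrac12\ln\bigl(|z-a|^2|1-z\bar a|^2\bigr)$ and differentiate in $r$ directly by expanding the moduli. At $r=1$ the two radial derivatives become
\[
\frac{1-\operatorname{Re}(\bar a z)}{|z-a|^2} \quad\text{and}\quad \frac{|a|^2-\operatorname{Re}(\bar a z)}{|z-a|^2},
\]
using $|1-z\bar a|=|z-a|$ on the circle. Their sum is $(1-2\operatorname{Re}(\bar a z)+|a|^2)/|z-a|^2=1$.
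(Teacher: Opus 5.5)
Your proof is correct, and its main argument takes a different route from the paper's; the paper's proof is exactly your cross-check. The paper writes both logarithms as $\frac12\ln$ of the squared moduli expanded in $z=re^{i\theta}$, differentiates in $r$, and sets $r=1$. There the two fractions share the denominator $1-2\operatorname{Re}(\bar a_j e^{i\theta})+|a_j|^2$, and their numerators add up to exactly that denominator.

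Your main argument instead combines the logarithmic-derivative formula $\partial_r\ln|F|=\operatorname{Re}\bigl(zF'/(rF)\bigr)$ with the reflection identity $1-z\bar a=z\,\overline{(z-a)}$ on $|z|=1$. This explains the cancellation rather than merely verifying it. The image-charge term becomes the conjugate of $-a/(z-a)$, so the two contributions collapse to $\operatorname{Re}\frac{z-a}{z-a}=1$.

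Your route also gives something extra. Step 1 with $|a|=1$ and $z\neq a$ gives $\operatorname{Re}\frac{z}{z-a}=\frac{1-\operatorname{Re}(\bar a z)}{2-2\operatorname{Re}(\bar a z)}=\frac12$. This is precisely the fact $\partial_\mu\ln|w-w_j^{(c)}|=\frac12$ on $\partial D$ that the paper states without proof in Appendix B.

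The paper's computation, by contrast, needs no complex-analytic input. It also sidesteps the ordering issue you correctly flag (restrict to $|z|=1$ only after differentiating), because it keeps $r$ general until the last line.
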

\begin{proof}
We consider the function:$$u(z) = \ln|1 - z \bar{a}_j| + \ln|z - a_j|.$$
Let $z = re^{i\theta}$. To compute the radial derivative, we express the terms in $u(z)$ using the identity $\ln|w| = \frac{1}{2}\ln(w\bar{w})$,
\begin{align}
u(re^{i\theta}) &= \frac{1}{2} \ln \left( (1 - r e^{i\theta} \bar{a}_j)(1 - r e^{-i\theta} a_j) \right) + \frac{1}{2} \ln \left( (r e^{i\theta} - a_j)(r e^{-i\theta} - \bar{a}_j) \right)\\
&= \frac{1}{2} \ln \left( 1 - r(e^{i\theta} \bar{a}_j + e^{-i\theta} a_j) + r^2 |a_j|^2 \right) + \frac{1}{2} \ln \left( r^2 - r(e^{i\theta} \bar{a}_j + e^{-i\theta} a_j) + |a_j|^2 \right).
\end{align}
On the boundary $\partial D$, the outward normal derivative $\frac{\partial}{\partial \nu}$ corresponds to the radial derivative $\frac{\partial}{\partial r}$ in polar coordinates $z = re^{i\theta}$ evaluated at $r=1$.
Differentiating with respect to $r$ yields $$\frac{\partial u}{\partial r} = \frac{1}{2} \left[ \frac{-(e^{i\theta} \bar{a}_j + e^{-i\theta} a_j) + 2r |a_j|^2}{1 - r(e^{i\theta} \bar{a}_j + e^{-i\theta} a_j) + r^2 |a_j|^2} + \frac{2r - (e^{i\theta} \bar{a}_j + e^{-i\theta} a_j)}{r^2 - r(e^{i\theta} \bar{a}_j + e^{-i\theta} a_j) + |a_j|^2} \right].$$Setting $r=1$ for the boundary $\partial D$,
we have
$$\left. \frac{\partial u}{\partial r} \right|_{r=1} = \frac{1}{2} \left[ \frac{2|a_j|^2 - 2(e^{i\theta} \bar{a}_j + e^{-i\theta} a_j) + 2 }{1 - (e^{i\theta} \bar{a}_j + e^{-i\theta} a_j) + |a_j|^2} \right]= 1.\square$$
\end{proof}
\textbf{Appendix B}
\begin{proposition}\label{prop:tildeR0}
The Laplace Neumann boundary problem is
\begin{align}
&\Delta \tilde{R}_0 = 0,\ \text{on}\ D\\
&\frac{\partial \tilde{R}_0}{\partial \mu} =  \sum_{j = 1}^K \frac{2\pi}{\alpha_j} c_j \frac{\partial \ln|f(w)-b_j|}{\partial \mu} - \frac{\partial \sum_{i = 1}^N d_i \ln|f(w)-a_i|}{\partial \mu}\ \text{on}\ \partial D\backslash\{w_1^{(c)},\ldots,w_K^{(c)}\}.
\end{align}
where $\mu$ is the outward normal direction on the boundary of the disk, $a_i = f(w_i)$, $i = 1,\ldots,N$ $b_j = f(w_j^{(c)})$, $j = 1,\ldots, K$. Then we have
\begin{equation}\label{eq:tildeR0}
\tilde{R}_0(w) = \sum_{j=1}^K \frac{2\pi}{\alpha_j} c_j \ln \left| \frac{f(w) - b_j}{(w - w_j^{(c)})^{\alpha_j/\pi}} \right| + \sum_{i=1}^N d_i \ln \left| \frac{(1 - w \bar{w}_i)(w - w_i)}{f(w) - a_i} \right|.
\end{equation}
\end{proposition}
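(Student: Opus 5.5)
The plan is to verify directly that the stated function $\tilde R_0$ is harmonic in $D$ and satisfies the Neumann data on $\partial D$ away from the prevertices. Uniqueness up to an additive constant then follows from the standard Neumann theory. The constant is irrelevant in the renormalized energy because of the sum rule \eqref{eq:topological constraint} with $q\equiv 0$.

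\textbf{Harmonicity.} I would treat the two sums separately.

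For the vertex terms, note that $f(w)-b_j$ vanishes in $\overline D$ only at $w=w_j^{(c)}$, since $f$ is a bijection onto $\overline\Omega$. Near $w_j^{(c)}$ it behaves like $\mathcal{A}_j (w-w_j^{(c)})^{\alpha_j/\pi}$. Hence the quotient $(f(w)-b_j)/(w-w_j^{(c)})^{\alpha_j/\pi}$ admits a zero-free holomorphic branch on the simply connected $D$, and its $\ln|\cdot|$ is harmonic in $D$.

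For the defect terms, $(w-w_i)/(f(w)-a_i)$ is holomorphic and nonvanishing in $D$, because $f$ is conformal and $f'(w_i)\neq0$. Also $1-w\bar w_i\neq0$ for $|w|\le1$, since $|w_i|<1$. So these logarithms are harmonic too.

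\textbf{Boundary condition.} Write $\tilde R_0 = T_1+T_2$, where
\[
T_1=\sum_j \tfrac{2\pi}{\alpha_j}c_j\ln|f-b_j|-\sum_i d_i\ln|f-a_i|
\]
collects exactly the terms whose normal derivatives appear in the Neumann data. The remainder is
\[
T_2=-\sum_j c_j \tfrac{2\pi}{\alpha_j}\tfrac{\alpha_j}{\pi}\ln|w-w_j^{(c)}|+\sum_i d_i\bigl(\ln|1-w\bar w_i|+\ln|w-w_i|\bigr).
\]
It therefore suffices to show $\partial_\mu T_2=0$ on $\partial D$ minus the prevertices.

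By Proposition~\ref{prop:71}, each bracket in the $d_i$ sum has normal derivative $1$. For $|w_j^{(c)}|=1$, a direct computation analogous to the one in that proposition (or its limiting case $|a_j|\to1$, where the two logarithms coincide) gives $\partial_r\ln|w-w_j^{(c)}|=\tfrac12$ on $|w|=1$, $w\neq w_j^{(c)}$. Hence
\[
\partial_\mu T_2=-\sum_j 2c_j\cdot\tfrac12+\sum_i d_i=\sum_i d_i-\sum_j c_j .
\]
This vanishes by the topological constraint \eqref{eq:topological constraint} with $q=\partial_\tau\varphi_b\equiv0$.

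\textbf{Uniqueness and compatibility.} One could also remark that the Neumann data has zero mean for the same reason, so that the Neumann problem is solvable. Uniqueness up to constants follows because the difference of two solutions is harmonic with zero normal derivative. One must check that the difference is bounded near the prevertices, which holds because both candidates have at most logarithmic behavior there and the formula above is in fact continuous up to $w_j^{(c)}$.

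\textbf{Main obstacle.} The main obstacle is justifying the branch of $(w-w_j^{(c)})^{\alpha_j/\pi}$ and the regularity near the prevertices. I would handle it via the local expansion $f(w)-b_j=\mathcal A_j(w-w_j^{(c)})^{\alpha_j/\pi}(1+o(1))$ derived from \eqref{eq:f'}.
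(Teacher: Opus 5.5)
Your proposal is correct and is essentially the paper's argument. The paper splits $\tilde R_0$ into a defect part and a corner part with compensating fluxes $\pm\sum_i d_i$, while you split it into the data-generating part $T_1$ and a correction $T_2$; both rest on Proposition~\ref{prop:71}, the identity $\partial_\mu\ln|w-w_j^{(c)}|=\tfrac12$ on $\partial D$ away from the prevertex, and the sum rule $\sum_i d_i=\sum_j c_j$.

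Two minor remarks. First, the branch of $(w-w_j^{(c)})^{\alpha_j/\pi}$ is a non-issue, since only its modulus enters and $\ln|f-b_j|$, $\ln|w-w_j^{(c)}|$ are separately harmonic in $D$. Second, your optional uniqueness claim holds only within the class of solutions bounded near the prevertices: for example, $\ln|w-w_1^{(c)}|-\ln|w-w_2^{(c)}|$ is harmonic in $D$ with zero normal derivative off the prevertices, so ``at most logarithmic behavior'' alone would not suffice.
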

\begin{proof}
Due to the linearity of the Laplace operator, we can decompose the regularized field into two harmonic components,$$\tilde{R}_0(w) = \tilde{R}_{0,\text{defect}}(w) + \tilde{R}_{0,\text{corner}}(w)$$
where each component handles the flux generated by the interior defects and the polygonal corners, respectively, while jointly satisfying the Neumann compatibility condition.
The defect component $\tilde{R}_{0,\text{defect}}$ satisfies
\begin{equation}\begin{cases}
\Delta \tilde{R}_{0,\text{defect}} = 0, \quad & w \in D \\
\frac{\partial \tilde{R}_{0,\text{defect}}}{\partial \mu} = - \frac{\partial \sum_{i=1}^N d_i \ln|f(w) - a_i|}{\partial \mu}  + \sum_{i = 1}^N d_i, \quad & w \in \partial D\label{eq:defect_bc}\end{cases}\end{equation}
In order for the Neumann problem to have a solution, the Neumann boundary condition must satisfy\begin{equation}\label{eq:Neumann}\int_{\partial D} \frac{\partial\tilde{R}_{0,\text{defect}}}{\partial \mu} dS = 0,
\end{equation}
which is the origin of the term $\sum_{i = 1}^N d_i$. We first decompose the singular potential generated by the mapped defects
\begin{equation}\label{eq:fa}
\ln|f(w) - a_i| = \ln|w - w_i| + \ln\left| \frac{f(w) - a_i}{w - w_i} \right|.
\end{equation}
As $w\to w_i$, the limit of the second term in \eqref{eq:fa} is $\ln|f'(w_i)|$. According to \eqref{eq:f'}, $f'$ has no singularity inside $D$.
To satisfy the boundary condition in \eqref{eq:defect_bc} for the first term, we use the following equation proved in Proposition \ref{prop:71} $$\frac{\partial}{\partial \mu} \left( \ln|1 - w\bar{w}_i| + \ln|w - w_i| \right) = 1 \quad \text{on } \partial D.$$
Thus, the analytical solution for the defect component is
\begin{equation}\label{eq:R0defect}
\tilde{R}_{0,\text{defect}}(w) = \sum_{i=1}^N d_i \ln|1 - w\bar{w}_i| - \sum_{i=1}^N d_i \ln\left| \frac{f(w) - a_i}{w - w_i} \right|
\end{equation}
The corner component: $\tilde{R}_{0,\text{corner}}$ satisfies $$\begin{cases}
\Delta \tilde{R}_{0,\text{corner}} = 0, \quad & w \in D \\
\frac{\partial \tilde{R}_{0,\text{corner}}}{\partial \mu} = \sum_{j=1}^K \frac{2\pi}{\alpha_j} c_j \frac{\partial \ln|f(w) - b_j|}{\partial \mu}  - \sum_{i=1}^N d_i, \quad & w \in \partial D \setminus \{w_1^{(c)}, \ldots, w_K^{(c)}\}
\end{cases}$$
The term $\ln|f(w) - b_j|$ is harmonic inside $D$. To absorb the residual flux $- \sum d_i$ generated by the defect component, we introduce a compensating harmonic field centered at the boundary pre-vertices $w_j^{(c)}$.
We utilize the topological charge conservation condition, $\sum_{j=1}^K c_j = \sum_{i=1}^N d_i$. A fundamental property of the logarithmic potential evaluated on the boundary of a unit disk is
$$\frac{\partial}{\partial \mu} \ln|w - w_j^{(c)}| = \frac{1}{2} \quad \text{on } \partial D \setminus\{w_1^{(c)}, \ldots, w_K^{(c)}\}.$$
Subtracting the term $2 \sum_{j=1}^K c_j \ln|w - w_j^{(c)}|$ perfectly neutralizes the residual flux $- \sum d_i$. Multiplying by the geometric factor, the analytical solution for the corner component is
\begin{equation}\label{eq:R0corner}
\tilde{R}_{0,\text{corner}}(w) = \sum_{j=1}^K \frac{2\pi}{\alpha_j} c_j \ln \left| \frac{f(w) - b_j}{(w - w_j^{(c)})^{\alpha_j/\pi}} \right|
\end{equation}
Remarkably, according to the asymptotic behavior of the general Schwarz-Christoffel mapping near the vertices ($f(w) - b_j \sim (w - w_j^{(c)})^{\alpha_j/\pi}$), this specific combination entirely eliminates the boundary singularities, rendering $\tilde{R}_{0,\text{corner}}$ smooth and well-defined even at the pre-vertices $w = w_j^{(c)}$.
Combining both components in \eqref{eq:R0defect} and \eqref{eq:R0corner}, we obtain the analytical solution for the regularized field $\tilde{R}_0(w)$ on the entire unit disk in \eqref{eq:tildeR0}.
\end{proof}
\bibliographystyle{unsrt}
\bibliography{main}

@article{han2022elastic,
  title={Elastic anisotropy in the reduced {Landau--de Gennes} model},
  author={Han, Yucen and Harris, Joseph and Majumdar, Apala and Zhang, Lei},
  journal={Proceedings. Mathematical, Physical, and Engineering Sciences},
  volume={478},
  number={2261},
  pages={20210966},
  year={2022}
}

@book{de1993physics,
  title={The physics of liquid crystals},
  author={De Gennes, Pierre-Gilles and Prost, Jacques},
  number={83},
  year={1993},
  publisher={Oxford {U}niversity {P}ress},
  edition = {2},
  address={Oxford}
}

@article{lagerwall2012new,
  title={A new era for liquid crystal research: Applications of liquid crystals in soft matter nano-, bio-and microtechnology},
  author={Lagerwall, Jan PF and Scalia, Giusy},
  journal={Current Applied Physics},
  volume={12},
  number={6},
  pages={1387--1412},
  year={2012},
  publisher={Elsevier}
}

@book{wojtowicz1975introduction,
  title={Introduction to liquid crystals},
  author={Wojtowicz, Peter J and Sheng, Ping and Priestley, EB},
  year={1975},
  publisher={Springer}
}

@book{bethuel1994ginzburg,
  title={{Ginzburg-Landau} vortices},
  author={Bethuel, Fabrice and Brezis, Ha{\"\i}m and H{\'e}lein, Fr{\'e}d{\'e}ric and others},
  volume={13},
  year={1994},
  publisher={Springer}
}

@inproceedings{lin1995solutions,
  title={Solutions of {Ginzburg-Landau} equations and critical points of the renormalized energy},
  author={Lin, Fang Hua},
  booktitle={Annales de l'Institut Henri Poincar{\'e} C, Analyse non lin{\'e}aire},
  volume={12},
  number={5},
  pages={599--622},
  year={1995},
  organization={Elsevier}
}

@article{majumdar2004elastic,
  title={Elastic energy of liquid crystals in convex polyhedra},
  author={Majumdar, Apala and Robbins, JM and Zyskin, M},
  journal={Journal of Physics A: Mathematical and General},
  volume={37},
  number={44},
  pages={L573},
  year={2004},
  publisher={IOP Publishing}
}

@book{cormen2022introduction,
  title={Introduction to algorithms},
  author={Cormen, Thomas H and Leiserson, Charles E and Rivest, Ronald L and Stein, Clifford},
  year={2022},
  publisher={MIT press}
}

@article{bethuel1993asymptotics,
  title={Asymptotics for the minimization of a {Ginzburg-Landau} functional},
  author={Bethuel, Fabrice and Brezis, Ha{\"\i}m and H{\'e}lein, Fr{\'e}d{\'e}ric},
  journal={Calculus of Variations and Partial Differential Equations},
  volume={1},
  number={2},
  pages={123--148},
  year={1993},
  publisher={Springer}
}

@article{yin2019high,
  title={High-index optimization-based shrinking dimer method for finding high-index saddle points},
  author={Yin, Jianyuan and Zhang, Lei and Zhang, Pingwen},
  journal={SIAM Journal on Scientific Computing},
  volume={41},
  number={6},
  pages={A3576--A3595},
  year={2019},
  publisher={SIAM}
}

@article{yin2020construction,
  title={Construction of a pathway map on a complicated energy landscape},
  author={Yin, Jianyuan and Wang, Yiwei and Chen, Jeff ZY and Zhang, Pingwen and Zhang, Lei},
  journal={Physical Review Letters},
  volume={124},
  number={9},
  pages={090601},
  year={2020},
  publisher={APS}
}

@article{han2021solution,
  title={Solution landscape of a reduced {Landau--de Gennes} model on a hexagon},
  author={Han, Yucen and Yin, Jianyuan and Zhang, Pingwen and Majumdar, Apala and Zhang, Lei},
  journal={Nonlinearity},
  volume={34},
  number={4},
  pages={2048--2069},
  year={2021},
  publisher={IOP Publishing}
}

@book{Logg2012FEniCS,
  author    = {Logg, Anders and Mardal, Kent-Andre and Wells, Garth},
  title     = {Automated Solution of Differential Equations by the Finite Element Method: The {FEniCS Book}},
  series    = {Lecture Notes in Computational Science and Engineering},
  volume    = {84},
  publisher = {Springer},
  address   = {Berlin},
  year      = {2012}
}

@misc{liu2026saddlescapev10pythonpackage,
      title={{SaddleScape V1.0: A Python Package for Constructing Solution Landscapes via High-index Saddle Dynamics}}, 
      author={Yuyang Liu and Hua Su and Zixiang Xiao and Lei Zhang and Jin Zhao},
      year={2026},
      eprint={2601.01081},
      archivePrefix={arXiv},
      primaryClass={math.NA},
      url={https://arxiv.org/abs/2601.01081}, 
}

@article{shi2022nematic,
  author  = {Shi, Baoming and Han, Yucen and Zhang, Lei},
  title   = {Nematic liquid crystals in a rectangular confinement: Solution landscape, and bifurcation},
  journal = {SIAM Journal on Applied Mathematics},
  volume  = {82},
  number  = {5},
  pages   = {1808--1828},
  year    = {2022}
}

@book{HornJohnson2012,
  author    = {Horn, Roger A. and Johnson, Charles R.},
  title     = {Matrix Analysis},
  edition   = {2},
  publisher = {Cambridge University Press},
  address   = {Cambridge},
  year      = {2012},
  doi       = {10.1017/CBO9781139020411},
  isbn      = {9780521839402}
}

@article{Oppenheim1930,
  title={Inequalities connected with definite Hermitian forms},
  author={Oppenheim, Alexander},
  journal={Journal of the London Mathematical Society},
  volume={1},
  number={2},
  pages={114--119},
  year={1930},
  publisher={Wiley Online Library}
}

@article{Schechter1959,
  author  = {Schechter, Samuel},
  title   = {On the Inversion of Certain Matrices},
  journal = {Mathematical Tables and Other Aids to Computation},
  volume  = {13},
  number  = {66},
  pages   = {73--77},
  year    = {1959},
  doi     = {10.2307/2001955}
}

@article{andreshafrir1998,
  author  = {Andr{\'e}, Nelly and Shafrir, Itai},
  title   = {Minimization of a {Ginzburg--Landau} Type Functional with Nonvanishing Dirichlet Boundary Condition},
  journal = {Calculus of Variations and Partial Differential Equations},
  volume  = {7},
  number  = {3},
  pages   = {191--217},
  year    = {1998},
  doi     = {10.1007/s005260050106},
  url     = {https://doi.org/10.1007/s005260050106}
}

@article{alamabronsard2023,
  author  = {Alama, Stan and Bronsard, Lia and van Brussel, Lee},
  title   = {On Minimizers of the 2D {Ginzburg--Landau} Energy with Tangential Anchoring},
  journal = {Nonlinear Analysis},
  volume  = {232},
  pages   = {113276},
  year    = {2023},
  doi     = {10.1016/j.na.2023.113276},
  url     = {https://doi.org/10.1016/j.na.2023.113276}
}

@article{alamabronsard2024,
  author  = {Alama, Stan and Bronsard, Lia and Colinet, Andrew},
  title   = {$\Gamma$-Convergence of the {Ginzburg--Landau} Functional with Tangential Boundary Conditions},
  journal = {Journal of Functional Analysis},
  volume  = {287},
  number  = {11},
  pages   = {110621},
  year    = {2024},
  doi     = {10.1016/j.jfa.2024.110621},
  url     = {https://doi.org/10.1016/j.jfa.2024.110621}
}

@article{shi2023hierarchies,
  title={Hierarchies of critical points of a {Landau-de Gennes} free energy on three-dimensional cuboids},
  author={Shi, Baoming and Han, Yucen and Yin, Jianyuan and Majumdar, Apala and Zhang, Lei},
  journal={Nonlinearity},
  volume={36},
  number={5},
  pages={2631--2654},
  year={2023},
  publisher={IOP Publishing}
}

@article{shi2024multistability,
  title={Multistability for nematic liquid crystals in cuboids with degenerate planar boundary conditions},
  author={Shi, Baoming and Han, Yucen and Majumdar, Apala and Zhang, Lei},
  journal={SIAM Journal on Applied Mathematics},
  volume={84},
  number={2},
  pages={756--781},
  year={2024},
  publisher={SIAM}
}

@article{han2023reduced,
  title={A reduced {Landau-de Gennes} study for nematic equilibria in three-dimensional prisms},
  author={Han, Yucen and Shi, Baoming and Zhang, Lei and Majumdar, Apala},
  journal={IMA Journal of Applied Mathematics},
  volume={88},
  number={5},
  pages={645--676},
  year={2023},
  publisher={Oxford University Press}
}

@article{han2019transition,
  title={Transition pathways between defect patterns in confined nematic liquid crystals},
  author={Han, Yucen and Hu, Yucheng and Zhang, Pingwen and Zhang, Lei},
  journal={Journal of Computational Physics},
  volume={396},
  pages={1--11},
  year={2019},
  publisher={Elsevier}
}

@article{majumdar2010equilibrium,
  title={Equilibrium order parameters of nematic liquid crystals in the {Landau-de Gennes} theory},
  author={Majumdar, Apala},
  journal={European Journal of Applied Mathematics},
  volume={21},
  number={2},
  pages={181--203},
  year={2010},
  publisher={Cambridge University Press}
}

@article{patranabish2021quantum,
  title={Quantum-dots-dispersed bent-core nematic liquid crystal and cybotactic clusters: experimental and theoretical insights},
  author={Patranabish, Sourav and Wang, Yiwei and Sinha, Aloka and Majumdar, Apala},
  journal={Physical Review E},
  volume={103},
  number={5},
  pages={052703},
  year={2021},
  publisher={APS}
}

@article{rajamanickam2026nematic,
  title={Nematic equilibria in isosceles triangles: The effects of edge length and apex angle on solution landscapes in a reduced {Landau--de Gennes} framework},
  author={Rajamanickam, Prabakaran and Han, Yucen and Hinai, Thuriya AL and Majumdar, Apala},
  journal={La Matematica},
  volume={5},
  number={3},
  pages={53},
  year={2026},
  publisher={Springer}
}

@article{canevari2017order,
  title={Order reconstruction for nematics on squares and hexagons: A {Landau--de Gennes} study},
  author={Canevari, Giacomo and Majumdar, Apala and Spicer, Amy},
  journal={SIAM Journal on Applied Mathematics},
  volume={77},
  number={1},
  pages={267--293},
  year={2017},
  publisher={SIAM}
}

@misc{han2026asymptotictheoryginzburglandaucritical,
      title={The Asymptotic Theory of Ginzburg--Landau Critical Points on Domains with Corners}, 
      author={Yucen Han and Apala Majumdar},
      year={2026},
      eprint={2610.01410},
      archivePrefix={arXiv},
      primaryClass={math.AP},
      url={https://arxiv.org/abs/2610.01410}, 
}
\end{document}